\newtheorem{assumption}{Assumption}
\newtheorem{theorem}{Theorem}
\newtheorem{remark}{Remark}
\newcommand{\sat}{{\mathrm{sat}}\xspace}
\newcommand{\T}{\top}
\newcommand{\rmd}{{\mathrm d}}
\newtheorem{thm}{Theorem}
\newtheorem{cor}{Corollary}
\newtheorem{rmk}{Remark}
\begin{document}

\title{Robust Extended Kalman Filtering for Systems with Measurement Outliers}
\author{Huazhen Fang, Mulugeta A. Haile and Yebin Wang
\thanks{H. Fang is with Department of Mechanical Engineering, University of Kansas, Lawrence, KS 66045, USA (e-mail: fang@ku.edu) . }
\thanks{M. A. Haile is with Vehicle Technology Directorate, US Army Research Laboratory, Aberdeen, MD 21005, USA (e-mail: mulugeta.a.haile.civ@mail.mil).}
\thanks{Y. Wang is with Mitsubishi Electric Research Laboratories, Cambridge, MA 02139, USA (e-mail: yebinwang@ieee.org).}
}

\maketitle

\begin{abstract}
Outliers can contaminate the measurement process of many nonlinear  systems, which can be caused by sensor errors, model uncertainties, change in ambient environment, data loss or malicious cyber attacks. When the   extended Kalman filter (EKF) is applied to such systems for state estimation, the outliers can seriously reduce the estimation accuracy. This paper proposes an innovation saturation mechanism to modify the EKF toward building robustness against outliers. This mechanism applies a saturation function to the innovation process that the EKF leverages to correct the state estimation. As such, when an outlier occurs, the distorting innovation is saturated and thus prevented from damaging the state estimation. The mechanism features an adaptive adjustment of the saturation bound.  The design leads to the development robust EKF approaches for continuous- and discrete-time systems.  They are proven to be capable of generating bounded-error estimation in the presence of bounded outlier disturbances. An application study about mobile robot localization is presented, with the
numerical simulation  showing the efficacy of the proposed design.  Compared to existing methods, the proposed approaches can effectively reject outliers of various magnitudes, types and   durations,  at significant computational efficiency and without requiring measurement redundancy. 

\end{abstract}

\begin{IEEEkeywords}
Kalman filter, extended Kalman filter, robust estimation, measurement outlier, localization.
\end{IEEEkeywords}

\IEEEpeerreviewmaketitle

\section{Introduction}

The Kalman filter (KF) is arguably the most celebrated estimation technique in the literature, which can optimally estimate  the state of a linear dynamic system on the basis of a model and a stream of noisy measurements.  In practice, it is the extended KF (EKF), the KF's nonlinear version, that is the most widely used, because   real-world systems usually involve nonlinearities~\cite{Fang:JAS:2018}. The EKF's applications  range from  control systems to signal processing, system health monitoring, navigation and econometrics.  However, a major challenge for high-quality estimation in practice is the   measurement outliers, which can come from a diversity of sources, e.g., unreliable sensors, environmental variability, data dropouts in transmission, channel biases, incorrect assumptions about noises, model mismatch, and data falsification attacks from cyberspace~\cite{Ting:IROS:2007,Zhang:TCT:2010,Haile:TR:2016,Sinopoli:TAC:2004,Jia:TCST:2013,Li:TCNS:2017}. When outliers corrupt the measurement data, the performance of the EKF can be seriously degraded or damaged, and as such without adequate robustifcaition against outliers, it will not be viable for real-world practical estimation problems.


{\em Literature review}. Robust state estimation against measurement outliers has attracted significant interest from researchers during the past years. A majority of the effort has been devoted to robustifying the standard linear KF.  One can divide the existing methods mainly into three main categories. The first category models the measurement noises using  heavy-tailed distributions rather than exponential distributions, e.g., the Gaussian distribution as often assumed in the classical KF, to capture the occurrence of an outlier. Heavy-tailed Gaussian-mixture~\cite{Masreliez:TAC:1975,Sorenson:AUTO:1971} and $t$-distributed noise models~\cite{Meinhold:JASA:1989} are used to modify the KF for better robustness. In~\cite{Sarkka:TAC:2009}, an outlier is viewed as a result of measurement noises with variable covariances. Assuming the noise covariances to follow an inverse Gamma distribution, it proposes  to add to the KF a procedure of adaptive identification of the key parameters involved in the inverse Gamma distribution. 
 Methods of the second category seek to assign the measurement  at each time instant with a weight, in an attempt to downweight outlying measurements. In~\cite{Ting:IROS:2007}, an expectation-maximization (EM) algorithm is used to enable adaptive determination of the weight for a measurement. The results are generalized in~\cite{Agamennoni:ICRA:2011} and extended to the smoothing problem. In~\cite{Gandhi:TSP:2010,Palma:IJACSP:2017}, a measurement-weighting-based prewhitening procedure  is designed to decorrelate outliers from normal measurements as a basis for building a robust KF. The third category of methods extends the KF to conduct simultaneous state and input estimation,  regarding an outlier as an input added to  the measurement and estimating it together with the state. To accomplish this, a few methods have been built on minimum variance unbiased estimation~\cite{Keller:AUTO:1997,Hsieh:TAC:2000, Gillijns:AUTO:2007,Fang:IJACSP:2011,Yong:AUTO:2016,Shi:AUTO:2016}. Further, Bayesian methods are developed in~\cite{Fang:AUTO:2013} to achieve joint state and outlier estimation from the perspective of probabilistic filtering.

Although much attention has been given to outlier rejection for the linear KF, it is more  important and pressing  to robustify the EKF due to its practical significance. The  operation of the EKF at every time instant relies on  linearized approximations based on the   estimation in the previous time instant. When   measurement outliers arise, they would increase the state estimation error, which in turn will amplify the error involved in linearized approximations. This may further drive away the estimation at the next time instant, potentially leading to divergence. Hence, the EKF is more vulnerable to outliers, making it in an urgent need for outlier rejection. However, the current literature includes very few studies in this regard, due to the associated difficulty.
 In~\cite{Mu:JEM:2015}, the EKF is blended with a procedure that detects an outlier by evaluating the probability of its occurrence   based on innovation statistics. The method in~\cite{Gandhi:TSP:2010} is modified in~\cite{Gandhi:VT:2009} to deal with outliers affecting the EKF run.

It is noteworthy that the above robust KF/EKF techniques, despite their importance, generally involve a dramatic  increase in computational complexity, due to iterative optimization or other computationally expensive procedures to detect or suppress outliers at every time instant. Besides, some of them  require measurement redundancy to differentiate outliers from normal measurements or estimate them directly. This, however, is not always possible, because a real system often allows only a limited number of sensors to be deployed.

In addition to the above outlier-robust KF/EKF methods, one can also find   other types of estimation approaches in the literature aimed at suppressing outliers. Among them is the well-known   $\mathcal{H}_\infty$ filtering~\cite{Simon:2006,Li:AUTO:2016}, which considers outliers as unknown yet bounded  uncertainty. Yet, this approach introduces   conservatism as it performs worst-case estimation by design. A stubborn observer is developed in~\cite{Alessandri:AUTO:2018}, which employs a saturation function in the output injection signal to mitigate the influence of outliers. This method is not only computationally fast but also can deal with very large outliers. Nonetheless, it is applicable to only linear   systems suffering outliers that occur occasionally and individually. 

{\em Statement of contribution}. In this work, we offer an interesting new design to enhance the robustness of the EKF against measurement outliers, presenting a three-fold contribution.  First, we propose a unique innovation saturation mechanism to reject  outliers and ensure the performance of the EKF.  The innovation plays a key role in correcting the state prediction in the EKF but can be distorted by outliers. To overcome such a vulnerability, our mechanism saturates the innovation when it is unreasonably large in order  to reduce the effects of outliers. At the core of the mechanism is a procedure for adaptively adjusting the saturation bound  to effectively grasp the change of the innovation. Along this line, we develop the innovation-saturated EKF (IS-EKF)  for both continuous- and discrete-time systems. It is noteworthy that  the innovation saturation has been considered in~\cite{Alessandri:AUTO:2018}, which, however,  can only suppress outliers that appear singly or one at a time. By contrast, our design can handle outliers that persist for a  long period. Second, we analyze the stability of the proposed IS-EKF for the linear case, proving that it produces bounded-error estimation under certain conditions when outlier disturbances are bounded. Finally, we apply the proposed IS-EKF to the problem of mobile robot localization, demonstrating its effectiveness in providing reliable estimation in the presence of GPS outliers. Compared to the existing methods, the proposed IS-EKF approaches are structurally concise, computationally efficient, free from requiring measurement redundancy, and robust against   outliers of different magnitudes, types and durations, thus lending itself well practical application. This work is a significant extension of our previous work~\cite{Fang:CDC:2018}, in which a simple saturation mechanism is designed to robustify the linear KF. 

{\em Organization.}This paper is organized as follows. Section~\ref{SKF-CT-Sec} develops the IS-EKF for nonlinear continuous-time systems and analyzes its stability. Section~\ref{SKF-DT-Sec} extends the results to   discrete-time systems. An application example based on mobile robot localization is provided in Section~\ref{simulation} to illustrate the usefulness of the proposed design. Finally, our concluding remarks are presented in Section~\ref{conclusion}.

{\em Notation:} Notations used throughout this paper are standard. 
The $n$-dimensional Euclidean space is denoted as $\mathbb{R}^n$. For a vector, $\|\cdot\|$ denotes its 2-norm. The notation  $I$ is an identity matrix; $X>0$ ($\geq 0$) means that $X$ is a real, symmetric and positive definite (semidefinite) matrix; for a symmetric block matrix, we use a star ($\star$) to represent a symmetry-induced block; the notation $\mathrm{diag}(\ldots)$  stands for a block-diagonal matrix. The minimum and maximum eigenvalues of a real, symmetric matrix are denoted by $\underline{\lambda} (\cdot)$ and $\bar{\lambda} (\cdot)$, respectively.  Matrices are assumed to be
compatible for algebraic operations if their dimensions are not explicitly stated.

\section{IS-EKF for   Continuous-Time Systems}\label{SKF-CT-Sec}

This section develops the IS-EKF approach for a nonlinear continuous-time system and then offers analysis of  its stability.

\subsection{IS-EKF Architecture}
Consider the following model
\begin{equation}\label{CT_system}
\left\{
\begin{aligned}
\dot x_t &= f(x_t)+w_t,\\
y_t &= h(x_t)+Dd_t+v_t,
\end{aligned}
\right.
\end{equation}
where $x \in {\mathbb R}^n$ is the state vector,  $y \in {\mathbb R}^p$ the measurement vector, and $w_t\in {\mathbb R}^n$ and $v_t\in {\mathbb R}^p$  zero-mean, mutually independent noises with covariances given by $Q\geq 0$ and $R>0$, respectively. The nonlinear mappings $f$ and $h$ represent the  state evolution and measurement functions, respectively. Note that the measurement $y_t$ is subjected to the outlier effects caused by an unknown disturbance $d_t \in {\mathbb R}^m$. The matrix $D$ shows the relation between $d_t$ and $y_t$ and is assumed to be unknown.  

Modifying the conventional EKF, we propose the the following IS-EKF procedure:
\begin{subequations}\label{SKF-CT-Outline}
\begin{align}\label{state-update}
\dot{\hat x}_t &= f( \hat x_t) +K_t \cdot \sat_\sigma \left(y_t - h\left( \hat x_t \right) \right),\\
K_t &=  P_t C_t^\T R^{-1},\\ \label{P-CT-update}
\dot P_t &= A_t P_t+P_tA_t^\T+Q-K_t R K_t^\T,
\end{align}
\end{subequations}
where $K_t$ is the estimation gain matrix, and $P_t$ is a positive definite matrix that approximately represents the estimation error covariance in the standard EKF, and
\begin{align*}
A_t = \left.\frac{\partial f}{\partial x}\right|_{\hat x_t}, \ \ C_t = \left.\frac{\partial h}{\partial x}\right|_{\hat x_t}.
\end{align*}
Note that,
for the conventional EKF, the state estimation is corrected by the innovation $\left(y_t - h(\hat x_t) \right)$. Its effectiveness, however, can be compromised if $y_t$ is corrupted by an outlier. To address this issue, we  use a saturated innovation instead, as shown in~\eqref{state-update}. Specifically, it is defined as
\begin{align}\label{saturated-innovation}
\sat_\sigma \left(y_t - h( \hat x_t)\right)  = \left[
\begin{matrix}
\vdots \cr \sat_{\sqrt{\sigma_i}} \left(y_{i,t} - h_i ( \hat x_t)\right) \cr \vdots
\end{matrix}
\right],
\end{align}
where $\sigma_i>0$, $y_i$ is the $i$-th element of $y$, and $h_i$  the $i$-th element of $h$. For a variable $r$, the saturation function is defined as $\sat_{\epsilon}(r) = \max\left\{ - \epsilon, \min\{ \epsilon, r\}\right\}$.  For~\eqref{saturated-innovation}, the saturation range $[-\sqrt{\sigma_i},\sqrt{\sigma_i}]$ can be loosely viewed as an anticipated range of the innovation. If falling within this range, the innovation is considered as reasonable and applied without change to update the state estimation. Otherwise, it may be affected by an outlier and thus saturated to prevent the outlier from dragging the estimation away from a correct course.

For the EKF, it is observed that a fixed saturation bound will be too limited in the efficacy of rejecting outliers, as it may either confuse with an outlier a certain measurement generating a large innovation or miss an outlier approximately falling within the bounded range. More often than not, one  can also find it practically difficult  to select a fixed bound, especially when knowledge  of the outliers is scarce. We thus propose the following procedure  to adaptively adjust the saturation bound:
\begin{subequations}\label{sat_bound_dynamics-CT}
\begin{align}\label{sat_bound_dynamics-CT-1}
\dot \sigma_{i,t} &= \lambda_{1,i} \sigma_{i,t} + \gamma_{1,i} \varepsilon_{i,t} e^{-\varepsilon_{i,t}},  \ \ \sigma_{i,0}>0,\\  \label{sat_bound_dynamics-CT-2}
\dot \varepsilon_{i,t} &= \lambda_{2,i} \varepsilon_{i,t} + \gamma_{2,i} \left(y_{i,t} - h_i (\hat x_t) \right)^2,  \ \ \varepsilon_{i,0}>0,
\end{align}
\end{subequations}
for $i=1,2,\ldots,p$, where $\lambda_{1,i},\lambda_{2,i}<0$ and $\gamma_{1,i},\gamma_{2,i}>0$. For convenience of notation, we  define 
\begin{align*}
\Lambda_i = \mathrm{diag}([ \cdots \ \lambda_{i,j} \ \cdots ]), \
\Gamma_i = \mathrm{diag}([ \cdots \ \gamma_{i,j} \ \cdots ]),
\end{align*} for $i=1,2$ and $j=1,2,\cdots,p$.

Based on~\eqref{sat_bound_dynamics-CT},  the innovation saturation bound $\sigma_i$ will dynamically change according to the innovation $\left(y_{i,t} - h_i (\hat x_t) \right)$.  This mechanism specifically builds on a double-layer structure. The lower layer,   based on~\eqref{sat_bound_dynamics-CT-2},  tracks the   changes in the innovation signal --- the variable $\varepsilon_i$ will be maintained at an appropriate level when the innovation is normal but become large when the innovation is altered by outliers. The upper layer, based on~\eqref{sat_bound_dynamics-CT-1}, is designed to respond to the  change in innovation by adjusting the saturation bound. As is seen, $\sigma_i$ will rapidly decrease  when $\varepsilon_i$ is too large but lie within a reasonable range when given a normal $\varepsilon_i$. By adapting the saturation bound, the proposed design  in~\eqref{sat_bound_dynamics-CT}   will enable an improved discernment between an outlier and a normal measurement.



\subsection{Stability Analysis}

While there exist some  studies, it has been widely acknowledged as a challenge to determine the exact conditions for the asymptotic stability of the EKF. It is much more difficult, if not impossible, to do this for the IS-EKF, because of the added innovation saturation procedure and the nonlinear update of the saturation bound. To formulate a tractable analysis, we restrict our attention to the asymptotic stability of the IS-EKF for a linear deterministic system:
\begin{align*}
\left\{
\begin{aligned}
\dot x_t &=A x_t,\\
y_t &= Cx_t+Dd_t,
\end{aligned}
\right.
\end{align*}
for which the IS-EKF acts as a state observer, with the estimation performed by the innovation-saturated KF. Here, we assume that $(A,Q^{1\over 2})$ is stabilizable and that $(A,C)$ is detectable, as often needed for estimation. 

Let us first define the state estimation error as $e_t = \hat x_t - x_t$. The dynamics of $e_t$ is governed by
\begin{align}\label{e_t_dynamics}
\dot e_t = A e_t - K_t \cdot \sat_\sigma(Ce_t-Dd_t).
\end{align}
To proceed further, we define the following matrix
\begin{align*}\small
S_t =\left[ \begin{matrix} M_t-\alpha P_t^{-1} & -C^\T \left( R^{-1} +W \right) &  C^\T  (\Gamma_2-R^{-1}) D \cr
\star & 2W  & WD \cr
\star & \star & U
\end{matrix} \right],
\end{align*}
where  
$M_t= P_t^{-1}QP_t^{-1} + C^\T(R^{-1}-\Gamma_2)C $, $W$ is a diagonal positive definite matrix, $U$ a positive definite matrix, and $\alpha>0$ a positive scalar. Furthermore, we recall a well-known fact~\cite{Kucera:Kybernetika:1973}: if $(A,Q^{1\over 2})$ is stabilizable and $(A,C)$ detectable, $P_t$ for $P_0\geq 0$ in~\eqref{P-CT-update} will approach a unique positive-definite solution $ P_\infty$ satisfying
\begin{align*}
A  P_\infty +  P_\infty A^\T+Q- P_\infty C^\T R^{-1} C  P_\infty=0 .
\end{align*}

The following result can be obtained regarding the stability of the proposed IS-EKF for the linear deterministic case.
\begin{thm}\label{SKF-Stability-Thm}
Suppose    $\|d_t\|\leq 
\mu<\infty$ and $\rho = e^{-1}\sum_i \gamma_{1,i} < \infty$, where  $\mu, \rho>0$.
If there exist $P_0$, $W$, $U$, $\alpha$ and $\Gamma$ such that $S_t\geq 0$ and   $0 < \alpha \leq - \max \{\ldots, \lambda_{1,i},\lambda_{2,i},\ldots\} $ for $i=1,2,\ldots,p$, then the estimation error $e_t$ is upper bounded with
\begin{subequations}
\begin{align}\label{CT-SKF-error-bound-1}
\|e_t\| &\leq \sqrt{\frac{1}{c_2} \left[ e^{-\alpha t} V_0 + \frac{1}{\alpha}(1 - e^{-\alpha t})   ( c_1 \mu^2+\rho)\right]},\\\label{CT-SKF-error-bound-2}
\lim_{t\rightarrow\infty} \|e_t\| &\leq \sqrt{\frac{c_1 \mu^2+\rho  }{\alpha c_3}},
\end{align}
\end{subequations}
where $c_1 = \bar \lambda {(U+D^\T \Gamma_2 D)} $, $c_2 = \underline{\lambda}(P_t^{-1})$ and $c_3 = \underline{\lambda}(P_\infty^{-1})$.
\end{thm}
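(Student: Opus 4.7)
The plan is a Lyapunov argument orchestrating the three coupled closed-loop states: the estimation error $e_t$, the adaptive saturation bound $\sigma_{i,t}$, and its driver $\varepsilon_{i,t}$. The candidate I would take is
\[
V_t = e_t^\T P_t^{-1} e_t + \sum_{i=1}^p \sigma_{i,t} + \sum_{i=1}^p \varepsilon_{i,t},
\]
which is positive since $P_t>0$ and $\sigma_{i,t},\varepsilon_{i,t}>0$ under~\eqref{sat_bound_dynamics-CT}. The two scalar sums are essential: without them the adaptive saturation channel would be unmodelled in the Lyapunov budget, and the $\varepsilon e^{-\varepsilon}$ and $v^2$ forcing in~\eqref{sat_bound_dynamics-CT} would have nowhere to go.

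To get $\dot V_t$ I would use the Riccati identity implied by~\eqref{P-CT-update}, namely $\frac{d}{dt}P_t^{-1} = -P_t^{-1}A - A^\T P_t^{-1} - P_t^{-1}QP_t^{-1} + C^\T R^{-1}C$, together with the gain relation $P_t^{-1} K_t = C^\T R^{-1}$. Substituting these into the time derivative of $e_t^\T P_t^{-1} e_t$ cancels all $A$-dependent terms. Adding the $\sigma$- and $\varepsilon$-derivatives from~\eqref{sat_bound_dynamics-CT}, forming $\dot V_t + \alpha V_t$, and invoking the hypothesis $0<\alpha \leq -\max\{\lambda_{1,i},\lambda_{2,i}\}$ turns $\sum_i(\lambda_{1,i}+\alpha)\sigma_{i,t}$ and $\sum_i(\lambda_{2,i}+\alpha)\varepsilon_{i,t}$ into nonpositive quantities, while the elementary inequality $xe^{-x}\leq e^{-1}$ for $x\geq 0$ bounds $\sum_i \gamma_{1,i}\varepsilon_{i,t} e^{-\varepsilon_{i,t}}$ by $\rho$.

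The crux, and where the specific design of $S_t$ enters, is to bound the residual quadratic form in $(e_t,\sat_\sigma(Ce_t-Dd_t),d_t)$ by $c_1\mu^2+\rho$. Here I would exploit two freely addable nonnegative quantities: the componentwise saturation sector property $\sat_\sigma(v)^\T W\bigl(v-\sat_\sigma(v)\bigr)\geq 0$, valid for any diagonal $W>0$, and $\Psi_t^\T S_t \Psi_t \geq 0$ with $\Psi_t = [e_t^\T,\ \sat_\sigma^\T(Ce_t-Dd_t),\ d_t^\T]^\T$. The mixed $\Gamma_2$, $R^{-1}$, $W$, $U$ blocks of $S_t$ are engineered precisely so that, after expanding $(Ce_t-Dd_t)^\T\Gamma_2(Ce_t-Dd_t)$ and absorbing cross terms via the sector inequality and the LMI, all contributions involving $e_t$ and $\sat_\sigma$ cancel, leaving only $d_t^\T(U+D^\T\Gamma_2 D)d_t + \rho$. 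The bound $\|d_t\|\leq \mu$ then yields $\dot V_t + \alpha V_t \leq c_1\mu^2 + \rho$ with $c_1 = \bar\lambda(U+D^\T\Gamma_2 D)$. I expect verifying this cancellation, matching coefficients block by block against $S_t$, to be the most delicate bookkeeping step.

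The conclusion is then routine. The scalar comparison principle applied to the differential inequality yields $V_t \leq e^{-\alpha t}V_0 + \alpha^{-1}(1-e^{-\alpha t})(c_1\mu^2+\rho)$; dropping the nonnegative $\sum_i\sigma_{i,t}+\sum_i\varepsilon_{i,t}$ on the left and using $e_t^\T P_t^{-1} e_t \geq \underline{\lambda}(P_t^{-1})\|e_t\|^2 = c_2\|e_t\|^2$ produces~\eqref{CT-SKF-error-bound-1}. For~\eqref{CT-SKF-error-bound-2}, the convergence $P_t\to P_\infty$ already recalled lets me replace $c_2$ by $c_3 = \underline{\lambda}(P_\infty^{-1})$ and drive the transient $e^{-\alpha t}V_0$ to zero as $t\to\infty$.
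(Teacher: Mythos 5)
Your plan follows the paper's own route almost verbatim: the same composite Lyapunov function $V_t$, the Riccati identity for $\tfrac{\rmd}{\rmd t}P_t^{-1}$ together with $P_t^{-1}K_t = C^\T R^{-1}$, the bound $\varepsilon e^{-\varepsilon}\le e^{-1}$ giving $\rho$, a $W$-weighted sector inequality (yours, $\sat_\sigma(v)^\T W\bigl(v-\sat_\sigma(v)\bigr)\ge 0$ with $v=Ce_t-Dd_t$, is the same scalar inequality the paper invokes as $-s_t^\T W(s_t-Ce_t+Dd_t)\ge 0$), the hypothesis $S_t\ge 0$ to produce $\dot V_t\le-\alpha V_t+c_1\mu^2+\rho$, and the comparison lemma plus $P_t\to P_\infty$ for the two bounds. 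The final steps (dropping the nonnegative $\sigma,\varepsilon$ sums, using $c_2=\underline{\lambda}(P_t^{-1})$, $c_3=\underline{\lambda}(P_\infty^{-1})$) are exactly as in the paper.

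The one step that would fail as written is the crux you yourself flag: you pair $S_t$ with $\Psi_t=[e_t^\T,\ \sat_\sigma^\T(Ce_t-Dd_t),\ d_t^\T]^\T$, i.e.\ with the saturation \emph{output} in the second slot, whereas $S_t$ is engineered for the saturation \emph{error} (deadzone) $s_t=Ce_t-Dd_t-\sat_\sigma(Ce_t-Dd_t)$. The paper first substitutes $\sat_\sigma(Ce_t-Dd_t)=Ce_t-Dd_t-s_t$ into the correction term $-2e_t^\T C^\T R^{-1}\sat_\sigma(\cdot)$; only after this change of variable do the blocks of $S_t$ match the integrand term by term, leaving exactly $d_t^\T(U+D^\T\Gamma_2 D)d_t$ plus the $\lambda_{1,i}\sigma_{i,t}$, $\lambda_{2,i}\varepsilon_{i,t}$ terms and $\rho$. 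With your $\Psi_t$ the inequality $\Psi_t^\T S_t\Psi_t\ge 0$ is true but does not produce the cancellation: the cross term available in $\dot V_t$ is $-2e_t^\T C^\T R^{-1}\sat_\sigma(\cdot)$ while the $(1,2)$ block of $S_t$ is $-C^\T(R^{-1}+W)$, and the $(1,1)$ block $M_t$ carries $+C^\T R^{-1}C$, whereas in saturation-output coordinates the needed $(1,1)$ block is $P_t^{-1}QP_t^{-1}-C^\T(R^{-1}+\Gamma_2)C-\alpha P_t^{-1}$; the matrix you would actually need is the congruence $T^\T S_t T$ with $T:(e,\phi,d)\mapsto(e,\ Ce-\phi-Dd,\ d)$, not $S_t$ itself, so uncancelled $e_t$-quadratic terms of the wrong sign remain and cannot be absorbed into $c_1\mu^2+\rho$. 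The repair is exactly the paper's move: introduce $s_t$, rewrite the innovation term through it, apply the sector condition to $s_t$, and read off the quadratic form in $(e_t,s_t,d_t)$ against $S_t$.
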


{\em Proof:} We consider using the Lyapunov function 
\begin{align*}
V_t = e_t^\T P_t^{-1} e_t + \textstyle\sum_i \sigma_{i,t} + \textstyle\sum_i \varepsilon_{i,t} .
\end{align*}
The first-order time derivative of $V_t$ along~\eqref{e_t_dynamics} is
\begin{align*}
\dot V_t &= 2 e_t^\T P_t^{-1} \dot e_t + e_t^\T {\rmd \left(P_t^{-1}\right) \over \rmd t}  e_t + \sum_i \dot \sigma_{i,t} + \sum_i \dot \varepsilon_{i,t}\\
&= 2 e_t^\T P_t^{-1} \left[A e_t - K_t  \cdot\sat_\sigma\left(Ce_t-Dd_t\right)\right]\\
&\quad-e_t^\T P_t^{-1}  \left(AP_t+P_tA^\T+Q-K_t R K_t^\T\right) P_t^{-1}e_t\\
&\quad + (Ce_t-Dd_t)^\T \Gamma_2 (Ce_t-Dd_t) +\textstyle \sum_i\lambda_{1,i} \sigma_{i,t} \\
&\quad +\textstyle \sum_i\lambda_{2,i} \varepsilon_{i,t} +\textstyle \sum_i\gamma_{1,i} \varepsilon_{i,t} e^{-\varepsilon_{i,t}} \\
&\leq-e_t^\T P_t^{-1} Q P_t^{-1} e_t +e_t^\T C^\T (R^{-1}+\Gamma)Ce_t \\&\quad-2e_t^\T C^\T R^{-1} \sat_\sigma\left(Ce_t-Dd_t\right) - 2 e_t^\T C^\T \Gamma_2 D d_t \\ 
&\quad +d_t^\T D^\T \Gamma_2 D d_t+\textstyle\sum_i\lambda_{1,i} \sigma_{i,t } +\textstyle\sum_i\lambda_{2,i} \varepsilon_{i,t } + \rho,
\end{align*}
where the relation $\varepsilon_{i,t} e^{-\varepsilon_{i,t}} \leq e^{-1}$ is used. 
Let us define $s_t = Ce_t - Dd_t-\sat_\sigma(Ce_t-Dd_t)$. Then,
\begin{align*}
\dot V_t &\leq - e_t^\T M_t e_t  +2 e_t^\T C^\T R^{-1} s_t + 2e_t^\T C^\T  (R^{-1}-\Gamma_2) D d_t \\ 
&\quad  +d_t^\T D^\T \Gamma_2 D d_t+\textstyle\sum_i\lambda_{1,i} \sigma_{i,t } +\textstyle\sum_i\lambda_{2,i} \varepsilon_{i,t }+\rho.
\end{align*}
By~\cite[Lemma 1.6]{Tarbouriech:Springer:2011}, we have
\begin{align*}
-s_t^\T W \left(s_t -  Ce_t+Dd_t \right) \geq 0.
\end{align*}
It then follows that
\begin{align*}
&\dot V_t \leq \dot V_t - 2 s_t^\T W \left(s_t -  Ce_t+Dd_t \right) \\
& \leq - e_t^\T M_t e_t  +2 e_t^\T C^\T \left( R^{-1} +W \right)s_t  - 2s_t^\T W s_t \\
&\quad+ 2e_t^\T C^\T  (R^{-1}-\Gamma_2) D d_t -2s_t^\T WDd_t+ \\ 
&\quad +d_t^\T D^\T \Gamma_2 D d_t+\textstyle\sum_i\lambda_{1,i} \sigma_{i,t} +\textstyle\sum_i\lambda_{2,i} \varepsilon_{i,t }  + \rho\\
&= - \left[ \begin{matrix} e_t \cr s_t \cr d_t \end{matrix} \right]^\T  
\left[ \begin{matrix} M_t & -C^\T \left( R^{-1} +W \right) &  C^\T  (\Gamma_2-R^{-1}) D \cr
\star & 2W  & WD \cr
\star & \star & U
\end{matrix} \right] \\
& \quad \cdot \left[ \begin{matrix} e_t \cr s_t \cr d_t \end{matrix} \right]+d_t^\T \left(U+D^\T\Gamma_2 D \right)d_t+\textstyle\sum_i\lambda_{1,i} \sigma_{i,t} +\textstyle\sum_i\lambda_{2,i} \varepsilon_{i,t }\\ &\quad  + \rho.
\end{align*}
If $S_t\geq 0$, we have
\begin{align*}
\dot V(t)&\leq -\alpha e_t^\T P_t^{-1} e_t - \alpha \textstyle \sum_i \sigma_{i,t} - \alpha \textstyle \sum_i \varepsilon_{i,t} 
\\ &\quad +  d_t^\T \left(U+D^\T\Gamma_2 D \right)d_t  +  \textstyle \sum_i (\lambda_{1,i}+\alpha)\sigma_{i,t} \\
& \quad  +  \textstyle \sum_i (\lambda_{2,i}+\alpha)\varepsilon_{i,t}  + \rho.
\end{align*}
If $0<\alpha \leq - \max(\lambda_i) $, one has $\lambda_i+\alpha \leq 0$. Then,
\begin{align*}
\dot V_t&\leq -\alpha V_t + d_t^\T \left(U+D^\T\Gamma_2 D \right)d_t +\rho\\
&\leq -\alpha V_t  +c_1 \|d_t\|^2 +\rho \\
&\leq -\alpha V_t + c_1 \mu^2+\rho.
\end{align*}
Hence,
\begin{align*}
V_t \leq e^{-\alpha t} V_0 + \frac{1}{\alpha} (1 - e^{-\alpha t})   (c_1\mu^2+\rho).
\end{align*}
Furthermore, $V_t\geq c_2 \|e_t\|^2$. Then,
\begin{align*}
\| e_t\|^2 \leq \frac{1}{c_2} \left[ e^{-\alpha t} V_0 + \frac{1}{\alpha}(1 - e^{-\alpha t})  ( c_1 \mu^2+\rho) \right],
\end{align*}
which implies~\eqref{CT-SKF-error-bound-1}. When $t\rightarrow \infty$, we can obtain~\eqref{CT-SKF-error-bound-2}. \hfill$\bullet$

Theorem~\ref{SKF-Stability-Thm} shows that, when applied to a noise-free linear system with upper-bounded outliers,  the proposed IS-EKF scheme can lead to bounded-error estimation if some conditions are satisfied. Further,  the following corollary can be developed.
\begin{cor}\label{SKF-Stability-Cor}
Suppose    $\|d_t\|\leq 
\mu<\infty$ for $\mu >0$.
If there exist $P_0$, $W$, $U$, $\alpha$ and $\Gamma$ such that $S_t \geq 0$ and   $0 < \alpha \leq - \max \{\ldots, \lambda_{1,i},\lambda_{2,i}+\gamma_{1,i,}\ldots\} $ for $i=1,2,\ldots,p$, then the estimation error $e_t$ is upper bounded with
\begin{subequations}
\begin{align}\label{CT-SKF-error-bound-1}
\|e_t\| &\leq \sqrt{\frac{1}{c_2} \left[ e^{-\alpha t} V_0 + \frac{1}{\alpha}(1 - e^{-\alpha t})    c_1 \mu^2 \right]},\\\label{CT-SKF-error-bound-2}
\lim_{t\rightarrow\infty} \|e_t\| &\leq \sqrt{\frac{c_1   }{\alpha c_3}}\mu.
\end{align}
\end{subequations}
\end{cor}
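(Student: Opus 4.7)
The plan is to mimic the proof of Theorem~\ref{SKF-Stability-Thm} verbatim, replacing only the point at which the term $\sum_i \gamma_{1,i}\varepsilon_{i,t} e^{-\varepsilon_{i,t}}$ is handled. In Theorem~\ref{SKF-Stability-Thm} this term was bounded above by the constant $\rho = e^{-1}\sum_i \gamma_{1,i}$ via $\varepsilon_{i,t} e^{-\varepsilon_{i,t}}\leq e^{-1}$, which leaked the additive $\rho$ into every subsequent bound. For the corollary I would instead use the simpler bound $\varepsilon_{i,t} e^{-\varepsilon_{i,t}}\leq \varepsilon_{i,t}$, pushing this contribution into the same $\varepsilon_{i,t}$ budget that $\lambda_{2,i}\varepsilon_{i,t}$ already occupies.

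Concretely, I would take the identical Lyapunov function $V_t = e_t^\T P_t^{-1}e_t + \sum_i \sigma_{i,t} + \sum_i \varepsilon_{i,t}$ and reproduce the derivative computation exactly up through the line
\begin{align*}
\dot V_t &\leq -\begin{bmatrix} e_t \\ s_t \\ d_t \end{bmatrix}^\T S_t \begin{bmatrix} e_t \\ s_t \\ d_t \end{bmatrix} + d_t^\T(U+D^\T\Gamma_2 D) d_t \\
&\quad + \textstyle\sum_i \lambda_{1,i}\sigma_{i,t} + \textstyle\sum_i \lambda_{2,i}\varepsilon_{i,t} + \textstyle\sum_i \gamma_{1,i}\varepsilon_{i,t} e^{-\varepsilon_{i,t}},
\end{align*}
keeping the last $\gamma_{1,i}$ term unprocessed rather than collapsing it to $\rho$. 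A short preliminary observation is needed here, namely that $\varepsilon_{i,t}\geq 0$ for all $t$: since $\varepsilon_{i,0}>0$ and $\dot\varepsilon_{i,t}\big|_{\varepsilon_{i,t}=0} = \gamma_{2,i}(y_{i,t}-h_i(\hat x_t))^2\geq 0$, the trajectory cannot cross zero from above. This nonnegativity justifies the inequality $\varepsilon_{i,t} e^{-\varepsilon_{i,t}}\leq \varepsilon_{i,t}$, which then merges $\sum_i \gamma_{1,i}\varepsilon_{i,t}$ into the $\varepsilon_{i,t}$ coefficients to yield $\sum_i(\lambda_{2,i}+\gamma_{1,i})\varepsilon_{i,t}$.

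Once $S_t\geq 0$ is invoked and $-\alpha$ is added and subtracted to each diagonal term of the Lyapunov function, the critical step is to observe that the new hypothesis $0<\alpha\leq -\max\{\ldots,\lambda_{1,i},\lambda_{2,i}+\gamma_{1,i},\ldots\}$ is exactly what is required to ensure both $\lambda_{1,i}+\alpha\leq 0$ and $(\lambda_{2,i}+\gamma_{1,i})+\alpha\leq 0$ simultaneously. This discards the $\sigma_{i,t}$ and $\varepsilon_{i,t}$ contributions nonpositively and gives the clean inequality $\dot V_t \leq -\alpha V_t + c_1\mu^2$, with no residual $\rho$ term.

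Integrating this differential inequality produces $V_t\leq e^{-\alpha t}V_0 + \frac{c_1\mu^2}{\alpha}(1-e^{-\alpha t})$, and then lower-bounding $V_t$ by $c_2\|e_t\|^2$ and $c_3\|e_t\|^2$ in the limit (using $P_t\to P_\infty$) gives~\eqref{CT-SKF-error-bound-1} and~\eqref{CT-SKF-error-bound-2} of the corollary directly. Since the corollary is essentially a refinement of Theorem~\ref{SKF-Stability-Thm} that trades a slightly more restrictive constraint on $\alpha$ for the elimination of $\rho$, there is no genuine obstacle; the only subtlety worth writing out carefully is the nonnegativity of $\varepsilon_{i,t}$, without which the replacement bound $\varepsilon_{i,t} e^{-\varepsilon_{i,t}}\leq \varepsilon_{i,t}$ would fail.
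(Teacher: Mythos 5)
Your proposal is correct and matches the paper's intended argument exactly: the paper proves the corollary by repeating the proof of Theorem~\ref{SKF-Stability-Thm} with the single change of bounding $\varepsilon_{i,t}e^{-\varepsilon_{i,t}}$ by $\varepsilon_{i,t}$ rather than $e^{-1}$, so that the extra $\gamma_{1,i}\varepsilon_{i,t}$ term is absorbed by the strengthened condition $0<\alpha\leq-\max\{\ldots,\lambda_{1,i},\lambda_{2,i}+\gamma_{1,i},\ldots\}$ and the residual $\rho$ disappears. Your added remark on the nonnegativity of $\varepsilon_{i,t}$ is a sensible (if implicit in the paper) justification of that bound.
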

To prove this result,  one can generally follow the proof of Theorem~\ref{SKF-Stability-Thm} while upper-bounding $\varepsilon_{i,t} e^{-\varepsilon_{i,t}} $ by $\varepsilon_{i,t}$  instead of $\leq e^{-1}$.  Corollary~\ref{SKF-Stability-Cor} implies that $e_t$ will exponentially approach zero as $t \rightarrow \infty$ if $\lim_{t\rightarrow \infty }d_t=0$. 


\begin{rmk}
According to Theorem~\ref{SKF-Stability-Thm} and Corollary~\ref{SKF-Stability-Cor}, selection of $P_0$ can be important for making the condition  $S_t \geq 0$ satisfied. Given the structure of $S_t$, it is cautioned that too large a $P_0$ may bring the risk of  divergent estimation. However,  $P_t$ is monotonically non-decreasing for $P_0 = 0$ if $(A,Q^{1\over 2})$ is stabilizable and $(A,C)$ detectable~\cite{Kucera:Kybernetika:1973}. Leveraging this property, it is suggested  that $P_0$ be set to  be a small number (close to zero in particular if some accurate prior knowledge about the initial state is available) when the IS-EKF approach is to be implemented. \hfill$\bullet$
\end{rmk}

\section{IS-EKF for Discrete-Time Systems}\label{SKF-DT-Sec}

Extending the notion in Section~\ref{SKF-CT-Sec}, this section investigates the development of IS-EKF for nonlinear discrete-time  systems. 

\subsection{IS-EKF Architecture}
Consider a nonlinear discrete-time model
\begin{equation}\label{DT_system}
\left\{
\begin{aligned}
 x_{k+1} &= f( x_k) +w_k,\\
y_k &= h(x_k)+Dd_k+v_k.
\end{aligned}
\right.
\end{equation}
The notations in above are the same as in Section~\ref{SKF-CT-Sec}. Still, the  noises $w_k$ and $v_k$ are zero-mean, mutually independent with covariances $Q\geq0$ and $R>0$, respectively. 

For this system, we propose the IS-EKF   as follows:
\begin{subequations}\label{SKF-DT-Outline}
\begin{align}
\hat x_{k|k-1} &= f\left( \hat x_{k-1|k-1} \right),\\
P_{k|k-1} &= A_{k-1} P_{k-1|k-1} A_{k-1}^\T+Q,\\ \label{DT-saturated-update}
\hat x_{k|k} &= \hat x_{k|k-1} +K_k \cdot \sat_\sigma \left(y_k - h\left( \hat x_{k|k-1}\right) \right),\\
K_k &=  P_{k|k-1} C_k^\T \left( C_k P_{k|k-1}C_k^\T+R \right)^{-1},\\ 
P_{k|k} &= P_{k|k-1}-K_k \left( C_kP_{k|k-1}C_k^\T+R \right) K_k^\T,
\end{align}
\end{subequations}
where $\hat x_{k|k-1}$ is the one-step-forward prediction of $x_k$,   $\hat x_{k|k}$ the updated estimate when $y_k$ arrives to correct the prediction, and
\begin{align*}
A_k = \left.\frac{\partial f}{\partial x}\right|_{\hat x_{k|k}}, \ \ C_k = \left.\frac{\partial h}{\partial x}\right|_{\hat x_{k|k-1}}.
\end{align*}
In addition, $K_k$ is the estimation gain, and $P_{k|k-1}$ and $P_{k|k}$ the approximate estimation error covariances in the standard EKF. Akin to~\eqref{SKF-CT-Outline}, we use an innovation saturation mechanism  to deal with   measurement outliers, as shown in~\eqref{DT-saturated-update}. In analogy to~\eqref{sat_bound_dynamics-CT}, the saturation bound is dynamically adjusted by
\begin{subequations}\label{sat_bound_dynamics-DT}
\begin{align}\label{sat_bound_dynamics-DT-1}
 \sigma_{i,k+1} &= \lambda_{1,i} \sigma_{i,k} + \gamma_{2,i} \varepsilon_{i,k} e^{- \varepsilon_{i,k}},  \ \sigma_{i,0}>0, \\ \label{sat_bound_dynamics-DT-2}
  \varepsilon_{i,k+1} &= \lambda_{2,i} \varepsilon_{i,k} + \gamma_{2,i} \left(y_{i,k} - C_i \hat x_{k|k-1}\right)^2,  \ \varepsilon_{i,0}>0,
\end{align}
\end{subequations}
for $i=1,2,\ldots,p$, where $0 < \lambda_{1,i},  \lambda_{2,i} <1$ and $\gamma_{1,i}, \gamma_{2,i}>0$.

\subsection{Stability Analysis}
We consider  the stability analysis for the above IS-EKF when it is applied to a linear deterministic system:
\begin{align*}
\left\{
\begin{aligned}
 x_{k+1} &= A x_k,\\
y_k &= C x_k +Dd_k.
\end{aligned}
\right.
\end{align*}
Defining the state prediction error as $e_k = \hat x_{k|k-1}-x_k$, its dynamics can be expressed as
\begin{align}\label{e_k_dynamics} \nonumber
e_{k+1} &=  \hat x_{k+1|k} - x_{k+1} \\
&= A e_k - A K_k \cdot  \sat_\sigma \left(Ce_k-Dd_{k} \right)  .
\end{align} 
Before proceeding further, we show some results that will be needed later. Suppose that $(A,Q^{1\over 2})$ is stabilizable and that $(A,C)$ detectable. Then, $P_{k|k-1}$ will converge to a fixed positive definite matrix $P_\infty$ that satisfies
\begin{align*}
P_\infty &= A P_\infty A^\T +Q - A P_\infty C^\T \left( C P_\infty C^\T +R \right)^{-1}\\ &\quad \cdot C P_\infty A^\T.
\end{align*}
It is also known that $P_{k|k-1}$ is upper and lower bounded, and so is $P_{k|k}$. Hence, there should exist $\epsilon$ such that $P_{k|k}^{-1} \leq \epsilon I$. Then,
\begin{align*}
&P_{k+1|k}^{-1} = \left(A P_{k|k} A^\T+ Q\right)^{-1} \\
& = A^{-\T} \left( P_{k|k} + A^{-1} Q A^{-\T}\right)^{-1} A^{-1}\\
&= A^{-\T} \left[ P_{k|k}^{-1} - P_{k|k}^{-1}  \left( P_{k|k}^{-1}  + A^\T Q^{-1}A \right)^{-1} P_{k|k}^{-1}\right] A^{-1}\\
&\leq A^{-\T} \left[ P_{k|k}^{-1} - P_{k|k}^{-1}  \left( \epsilon I + A^\T Q^{-1}A \right)^{-1} P_{k|k}^{-1}\right] A^{-1}\\
& = A^{-\T} \left[ P_{k|k}^{-1} - P_{k|k}^{-1}  \bar Q P_{k|k}^{-1}\right] A^{-1},
\end{align*}
where $\bar Q = \left( \epsilon I + A^\T Q^{-1}A \right)^{-1}$.
We also define
\begin{align*}
Z_k = \left[ \begin{matrix} T_{1,k} - \alpha P_{k|k-1}^{-1} & T_{2,k}-C^\T W & T_{3,k} \cr
\star & T_{4,k}+2W & T_{5,k}+WD \cr
\star & \star & U \end{matrix}\right],
\end{align*}
where $W$ is a diagonal positive definite matrix, $U$ a positive definite matrix, $\alpha>0$ a positive scalar, and 
\begin{align*}
T_{1,k} &=C^\T R^{-1}C +P_{k|k}^{-1}\bar Q  P_{k|k}^{-1} - P_{k|k}^{-1}\bar Q C^\T R^{-1} C \\
&\quad - C^\T R^{-1} C \bar Q P_{k|k}^{-1} -C^\T R^{-1}C \left( P_{k|k} -\bar Q\right) \\ 
& \quad \cdot C^\T R^{-1} C - C^\T \Gamma_2 C,\\
 T_{2,k} &= - C^\T R^{-1} + P_{k|k}^{-1}\bar Q C^\T R^{-1} \\
&\quad +C^\T R^{-1}C \left(P_{k|k} - \bar Q\right) C^\T R^{-1},\\
T_{3,k} &=  \Big[- C^\T R^{-1} + P_{k|k}^{-1}\bar Q C^\T R^{-1}\Big. \\
&\quad \Big.+C^\T R^{-1}C \left(P_{k|k} - \bar Q\right) C^\T R^{-1} + C^\T \Gamma_2 \Big] D,\\
T_{4,k} &=  -R^{-1}C \left(P_{k|k} - \bar Q\right) C^\T R^{-1},\\
T_{5,k} &= -\left[ R^{-1}C \left(P_{k|k} - \bar Q\right) C^\T R^{-1} \right] D.
\end{align*}
In addition, we consider another matrix
\begin{align*}
 T_{6,k} =  D^\T \left[ R^{-1}C \left(P_{k|k} - \bar Q\right) C^\T R^{-1} + \Gamma_2 \right] D.
\end{align*}
Here, the upper boundedness of $P_{k|k}$ implies that $T_{6,k}$ is also upper bounded, with the bound denoted as $\bar {T}_{6}$.

The following theorem shows the result about the stability of the prediction error dynamics.
\begin{thm}\label{DT-SKF-Stability-Thm}
Suppose   $\|d_k\|\leq \mu <\infty$, $\rho = e^{-1}\textstyle \sum_i  \gamma_{1,i}<\infty$, and that $A$ is invertible. If there exist $P_{0|-1}$, $W$, $U$, $\alpha$ and  $\Gamma_2$ such that $Z_k\geq 0$ and $0<\alpha \leq 1- \max\{\ldots, \lambda_{1,i},\lambda_{2,i},\ldots\}$ for $i=1,2,\ldots,p$, then the prediction error $e_k$ is upper bounded with
\begin{subequations}
\begin{align}\label{DT-SKF-error-bound-1}
&\|e_k\|  \leq \sqrt{\frac{1}{c_2}\left[(1-\alpha)^k V_0 + \frac{1-(1-\alpha)^{k-1}}{\alpha} (c_1\mu^2+\rho) \right]},\\ \label{DT-SKF-error-bound-2}
&\lim_{k\rightarrow\infty} \|e_k\|  \leq \sqrt{\frac{ c_1\mu^2+\rho}{\alpha c_3}},
\end{align}
\end{subequations}
where $c_1 = \bar{\lambda}(\bar T_6+U)$, $c_2 = \underline{\lambda}(P_{k|k-1}^{-1})$, and $c_3 = \underline{\lambda}( P_\infty^{-1})$. 
Further, if $0<\alpha \leq 1- \max\{\ldots, \lambda_{1,i},\lambda_{2,i}+\gamma_{1,i},\ldots\}$ for $i=1,2,\ldots,p$, then
\begin{subequations}
\begin{align}\label{DT-SKF-error-bound-1}
&\|e_k\|  \leq \sqrt{\frac{1}{c_2}\left[(1-\alpha)^k V_0 + \frac{1-(1-\alpha)^{k-1}}{\alpha} c_1\mu^2 \right]},\\ \label{DT-SKF-error-bound-2}
&\lim_{k\rightarrow\infty} \|e_k\|  \leq \sqrt{\frac{ c_1}{\alpha c_3}}\mu.
\end{align}
\end{subequations}
\end{thm}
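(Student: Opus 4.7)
The plan is to mirror the Lyapunov argument of Theorem~\ref{SKF-Stability-Thm} in the discrete-time setting. I would take
\[
V_k = e_k^\T P_{k|k-1}^{-1} e_k + \textstyle\sum_i \sigma_{i,k} + \textstyle\sum_i \varepsilon_{i,k}
\]
as the candidate Lyapunov function and aim to establish the one-step contraction $V_{k+1} \leq (1-\alpha) V_k + c_1 \mu^2 + \rho$. Iterating this scalar recursion yields a bound of the advertised form, and combining with the uniform estimate $V_k \geq \underline\lambda(P_{k|k-1}^{-1}) \|e_k\|^2$ and its asymptotic version $V_k \geq \underline\lambda(P_\infty^{-1}) \|e_k\|^2$ delivers~\eqref{DT-SKF-error-bound-1}--\eqref{DT-SKF-error-bound-2}.

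The technical heart of the proof is the bound on $e_{k+1}^\T P_{k+1|k}^{-1} e_{k+1}$. First I would rewrite the error dynamics compactly as $e_{k+1} = A(e_k - P_{k|k} C^\T R^{-1} g_k)$, where $g_k = \sat_\sigma(Ce_k - Dd_k)$, by invoking the identity $K_k = P_{k|k} C^\T R^{-1}$ together with the information form $P_{k|k}^{-1} = P_{k|k-1}^{-1} + C^\T R^{-1} C$. The pre-computed matrix inequality $P_{k+1|k}^{-1} \leq A^{-\T}\bigl[P_{k|k}^{-1} - P_{k|k}^{-1} \bar Q P_{k|k}^{-1}\bigr] A^{-1}$ then cancels the outer $A$ factors and collapses the expression into a quadratic form in $e_k$ and $g_k$ alone. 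Introducing the deadzone residue $s_k = Ce_k - Dd_k - g_k$ and substituting $g_k = Ce_k - Dd_k - s_k$ recovers exactly the blocks $T_{1,k}$--$T_{5,k}$ of $Z_k$ in a quadratic form on $[e_k^\T, s_k^\T, d_k^\T]^\T$.

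Next I would fold in the bound-adjustment dynamics. Equation~\eqref{sat_bound_dynamics-DT-2} contributes $(Ce_k - Dd_k)^\T \Gamma_2 (Ce_k - Dd_k)$, whose diagonal and cross pieces supply the $C^\T\Gamma_2 C$ and $C^\T\Gamma_2 D$ adjustments to $T_{1,k}$ and $T_{3,k}$, while its $D^\T \Gamma_2 D$ piece combines with $U$ through $T_{6,k}$ to produce $c_1 = \bar\lambda(\bar T_6 + U)$. Equation~\eqref{sat_bound_dynamics-DT-1}, together with the uniform bound $\varepsilon e^{-\varepsilon} \leq e^{-1}$, yields $\textstyle\sum_i \sigma_{i,k+1} \leq \textstyle\sum_i \lambda_{1,i} \sigma_{i,k} + \rho$. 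Absorbing the sign-indefinite $s_k$ terms via the deadzone sector inequality $-s_k^\T W(s_k - Ce_k + Dd_k) \geq 0$ of~\cite[Lemma~1.6]{Tarbouriech:Springer:2011} and then invoking $Z_k \geq 0$ to discard the assembled quadratic form leaves
\[
V_{k+1} - (1-\alpha) V_k \leq \textstyle\sum_i (\lambda_{1,i} + \alpha - 1)\sigma_{i,k} + \textstyle\sum_i (\lambda_{2,i} + \alpha - 1)\varepsilon_{i,k} + c_1 \mu^2 + \rho,
\]
and the hypothesis $\alpha \leq 1 - \max\{\lambda_{1,i},\lambda_{2,i}\}$ renders the residual $\sigma,\varepsilon$ sums non-positive, closing the contraction.

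For the sharper second half of the statement, replacing $\varepsilon e^{-\varepsilon} \leq e^{-1}$ with the linear bound $\varepsilon e^{-\varepsilon} \leq \varepsilon$ converts the constant $\rho$ into an extra $\gamma_{1,i} \varepsilon_{i,k}$ term inside the $\varepsilon$-channel residual; absorbing this into the non-positivity requirement forces the stricter condition $\alpha \leq 1 - \max\{\lambda_{1,i},\lambda_{2,i}+\gamma_{1,i}\}$ but removes $\rho$ entirely from the eventual error bound. The main obstacle I anticipate is not the induction, which is mechanical once the contraction is secured, but rather the algebraic bookkeeping needed to verify that the cross terms emerging from $P_{k|k}^{-1} - P_{k|k}^{-1} \bar Q P_{k|k}^{-1}$ match the prescribed blocks of $Z_k$ exactly --- in particular the delicate placement of $C^\T R^{-1} C (P_{k|k} - \bar Q) C^\T R^{-1} C$ inside $T_{1,k}$ and of the corresponding $R^{-1} C (P_{k|k} - \bar Q) C^\T R^{-1}$ piece inside $T_{4,k}$ --- since any mis-assembly would silently invalidate the semidefiniteness hypothesis on $Z_k$ and derail the whole argument.
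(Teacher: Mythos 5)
Your proposal is correct and follows essentially the same route as the paper: the same Lyapunov function $V_k = e_k^\T P_{k|k-1}^{-1} e_k + \sum_i \sigma_{i,k} + \sum_i \varepsilon_{i,k}$, the same bound $P_{k+1|k}^{-1} \leq A^{-\T}\bigl[P_{k|k}^{-1} - P_{k|k}^{-1}\bar Q P_{k|k}^{-1}\bigr]A^{-1}$, the same identities $P_{k|k}^{-1} = P_{k|k-1}^{-1} + C^\T R^{-1}C$ and $P_{k|k}^{-1}K_k = C^\T R^{-1}$, the same deadzone residue $s_k$ with the sector inequality of~\cite[Lemma~1.6]{Tarbouriech:Springer:2011}, the same assembly into the quadratic form governed by $Z_k \geq 0$, and the same one-step contraction $V_{k+1} \leq (1-\alpha)V_k + c_1\mu^2 + \rho$ (with the $\varepsilon e^{-\varepsilon}\leq\varepsilon$ replacement for the sharper second half, exactly as the paper handles its continuous-time corollary). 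The block-matching concern you flag at the end is precisely the bookkeeping the paper's definitions of $T_{1,k}$--$T_{6,k}$ carry out, so no gap remains.
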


{\em Proof:} Consider a Lyapunov function
\begin{align*}
V_k = e_{k}^\T P_{k|k-1}^{-1} e_{k} + \textstyle \sum_i \sigma_{i,k} + \textstyle \sum_i \varepsilon_{i,k}. 
\end{align*}
Using~\eqref{e_k_dynamics}, we have
\begin{align*}
&V_{k+1}  = e_{k+1}^\T P_{k+1|k}^{-1} e_{k+1} + \textstyle \sum_i \sigma_{i,k+1}  + \textstyle \sum_i \varepsilon_{i,k+1}\\
& \leq  \left[ A e_k - A K_k \cdot \sat_\sigma (CAe_k - D d_k) \right]^\T  \\ 
&\quad  \cdot  A^{-\T}   \left[ P_{k|k}^{-1} - P_{k|k}^{-1}  \bar Q P_{k|k}^{-1} \right] A^{-1} \\
&\quad  \cdot  \left[ A e_k - A K_k \cdot \sat_\sigma (Ce_k - D d_k) \right] \\ &\quad + \left( Ce_k -Dd_k\right)^\T \Gamma_2 \left( Ce_k -Dd_k\right) +\textstyle \sum_i \lambda_{1,i} \sigma_{i,k} \\
&\quad+ \textstyle \sum_i \lambda_{2,i} \varepsilon_{i,k} +   \textstyle \sum_i  \gamma_{1,i} \varepsilon_{i,k} e^{-\varepsilon_{i,k}}\\
&\leq e_k^\T  \left[ P_{k|k}^{-1} - P_{k|k}^{-1}  \bar Q P_{k|k}^{-1} \right]  e_k - 2 e_k^\T   \left[ P_{k|k}^{-1} - P_{k|k}^{-1}  \bar Q P_{k|k}^{-1} \right] \\
&\quad \cdot K_k  \cdot \sat_\sigma (Ce_k - D d_k) + \sat_\sigma^\T (Ce_k - D d_k) \cdot K_k^\T  \\
&\quad \cdot  \left[ P_{k|k}^{-1} - P_{k|k}^{-1}  \bar Q P_{k|k}^{-1} \right]  K_k \cdot \sat_\sigma (Ce_k - D d_k)\\
&\quad + \left( Ce_k -Dd_k\right)^\T \Gamma_2 \left( Ce_k -Dd_k\right) +\textstyle \sum_i \lambda_{1,i} \sigma_{i,k} \\
&\quad+ \textstyle \sum_i \lambda_{2,i} \varepsilon_{i,k} +   \rho.
\end{align*}
Let us define
\begin{align*}
s_k = Ce_k -D d_k - \sat_\sigma(Ce_k-Dd_k).
\end{align*}
In addition, we have $P_{k|k}^{-1} = P_{k|k-1}^{-1} +C^\T R^{-1}C$, 
$P_{k|k}^{-1} K_{k } = C^\T R^{-1}$ and  $K_k^\T P_{k|k}^{-1} K_{k} = R^{-1} C P_{k|k} C^\T R^{-1}$. These relations can be readily proven. It then follows that
\begin{align*}
V_{k+1} &\leq e_k^\T P_{k|k-1}^{-1} e_k - e_k^\T T_{1,k} e_k-2 e_k^\T T_2 s_k -2 e_k^\T T_{3,k} d_k\\
& \quad - s_k^\T T_{4,k} s_k - 2 s_k^\T T_{5,k} d_k  + d_k^\T T_{6,k} d_k + \textstyle \sum_i \lambda_{1,i} \sigma_{i,k}\\
&\quad+ \textstyle \sum_i \lambda_{2,i} \varepsilon_{i,k} +  \rho.
\end{align*}
According to~\cite[Lemma 1.6]{Tarbouriech:Springer:2011}, we have
\begin{align*}
-s_k^\T W (s_k - Ce_k +D d_k) \geq 0.
\end{align*}
It can be obtained that
\begin{align*}
V_{k+1} &\leq V_{k+1} - 2s_k^\T W (s_k - Ce_k +D d_k) \\
&\leq  e_k^\T P_{k|k-1}^{-1} e_k - e_k^\T T_{1,k} e_k-2 e_k^\T (T_2-C^\T W) s_k \\
&\quad  -2 e_k^\T T_{3,k} d_k - s_k^\T \left( T_{4,k}+2W \right) s_k\\
&\quad  - 2 s_k^\T \left( T_{5,k}+WD \right) d_k  + d_k^\T T_{6,k} d_k \\
&\quad + \textstyle \sum_i \lambda_{1,i} \sigma_{i,k}+ \textstyle \sum_i \lambda_{2,i} \varepsilon_{i,k} +  \rho\\
&= e_k^\T P_{k|k-1}^{-1} e_k - \left[ \begin{matrix} e_k \cr s_k \cr d_k\end{matrix}\right]^\T \\
&\quad \cdot
\left[ \begin{matrix} T_{1,k} & T_{2,k}-C^\T W & T_{3,k} \cr
\star & T_{4,k}+2W & T_{5,k}+WD \cr
\star & \star & U \end{matrix}\right]\left[ \begin{matrix} e_k \cr s_k \cr d_k\end{matrix}\right]\\
&\quad + d_k^\T \left(T_{6,k}+U \right) d_k  + \textstyle \sum_i \lambda_{1,i} \sigma_{i,k}+ \textstyle \sum_i \lambda_{2,i} \varepsilon_{i,k} +  \rho.
\end{align*}
If $Z_k\geq 0$, then
\begin{align*}
V_{k+1}& \leq (1-\alpha ) e_k^\T P_{k|k-1}^{-1} e_k + (1-\alpha ) \textstyle \sum_i \sigma_{i,k} \\ 
&\quad + (1-\alpha ) \textstyle \sum_i \varepsilon_{i,k} +  d_k^\T \left(T_{6,k}+U \right) d_k  \\
&\quad + \textstyle \sum_i (\lambda_{1,i} +\alpha-1)\sigma_{i,k} + \textstyle \sum_i (\lambda_{2,i} +\alpha-1)\varepsilon_{i,k} +\rho. 
\end{align*}
Because  $0<\alpha \leq 1- \max\{\lambda_{1,i},\lambda_{2,i}\}$ for $i=1,2,\ldots,p$,
\begin{align*}
V_{k+1} &\leq (1-\alpha) V_k +  d_k^\T \left(T_{6,k}+U \right) d_k + \rho\\
&\leq (1-\alpha) V_k +  \bar{\lambda}\left(\bar T_6+U \right)\mu^2+\rho,
\end{align*}
from which one can easily obtain~\eqref{DT-SKF-error-bound-1}-\eqref{DT-SKF-error-bound-2}. \hfill$\bullet$

\begin{rmk}
For the discrete-time case, it is also recommended that $P_{k|k-1}$ is initialized with a small $P_{0|-1}$, which can be   near zero, in order to make the conditions in Theorem~\ref{DT-SKF-Stability-Thm} satisfied more easily. \hfill$\bullet$
\end{rmk}

\begin{figure}  \centering
    \subfigure[]{
    \includegraphics[width=0.38\textwidth]{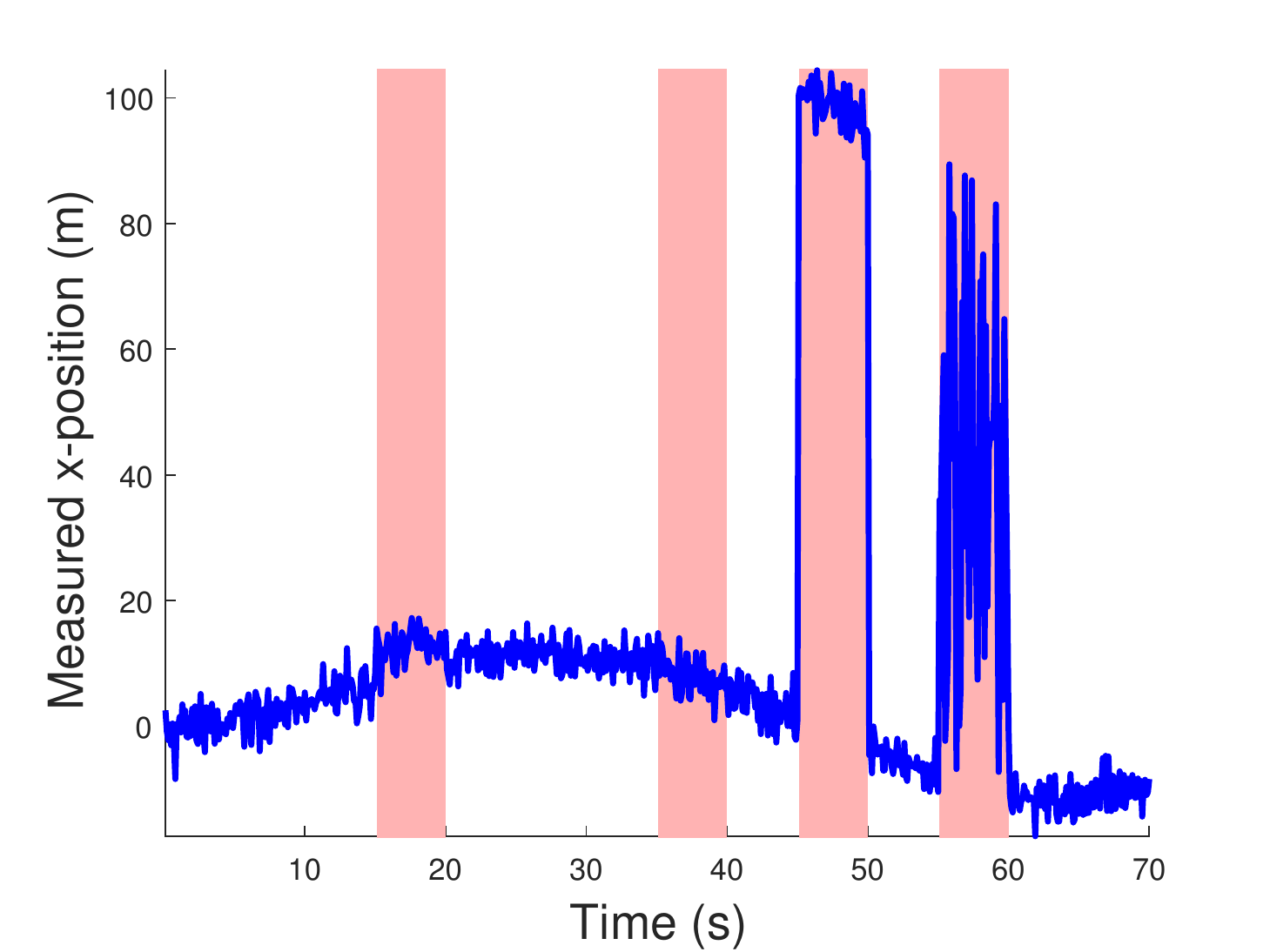}\label{Ex1-x1}}\\ 
  \subfigure[]{
    \includegraphics[width=0.38\textwidth]{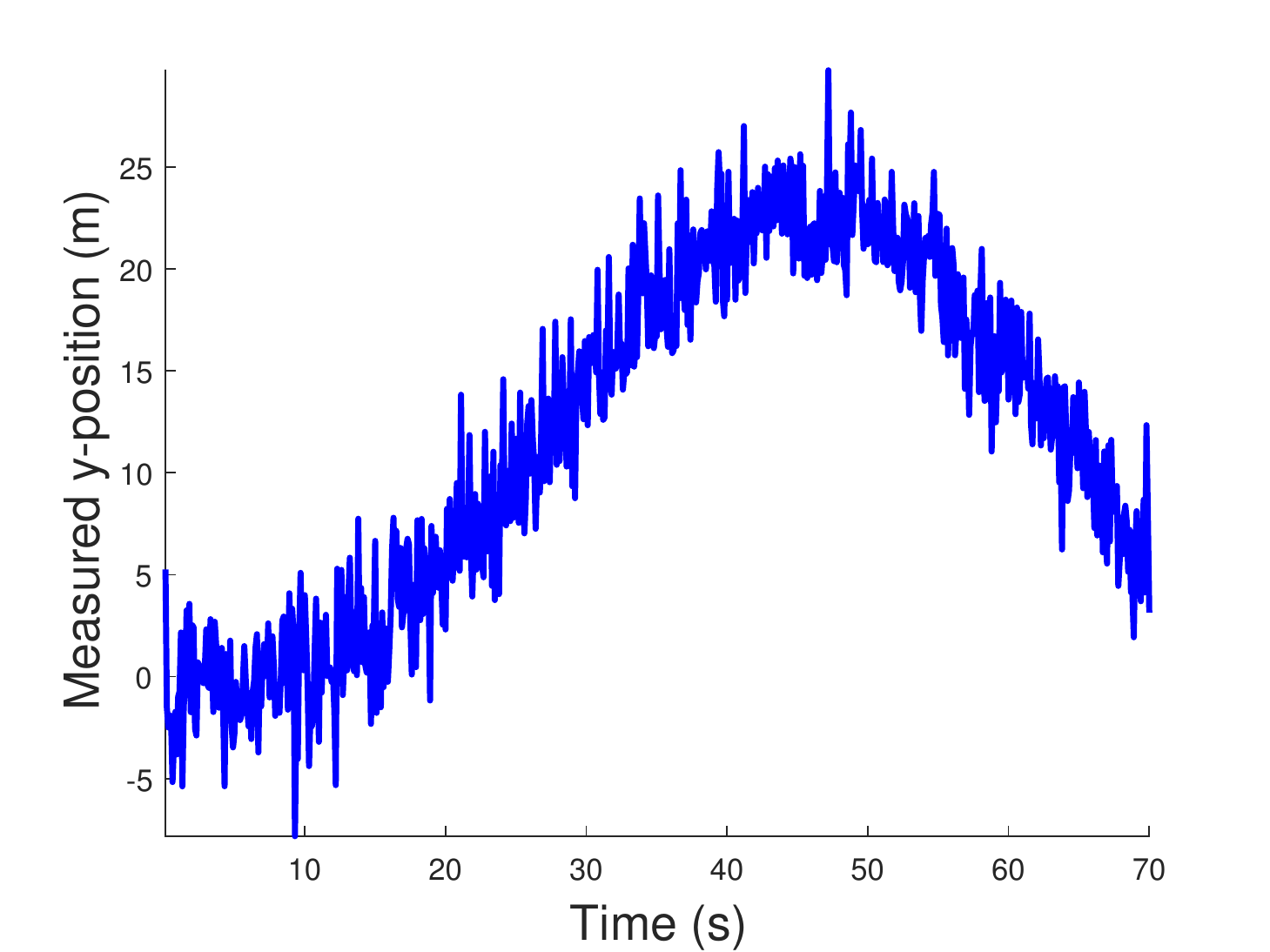}\label{Ex1-x2}}\\ 
   \subfigure[]{
   \includegraphics[width=0.38\textwidth]{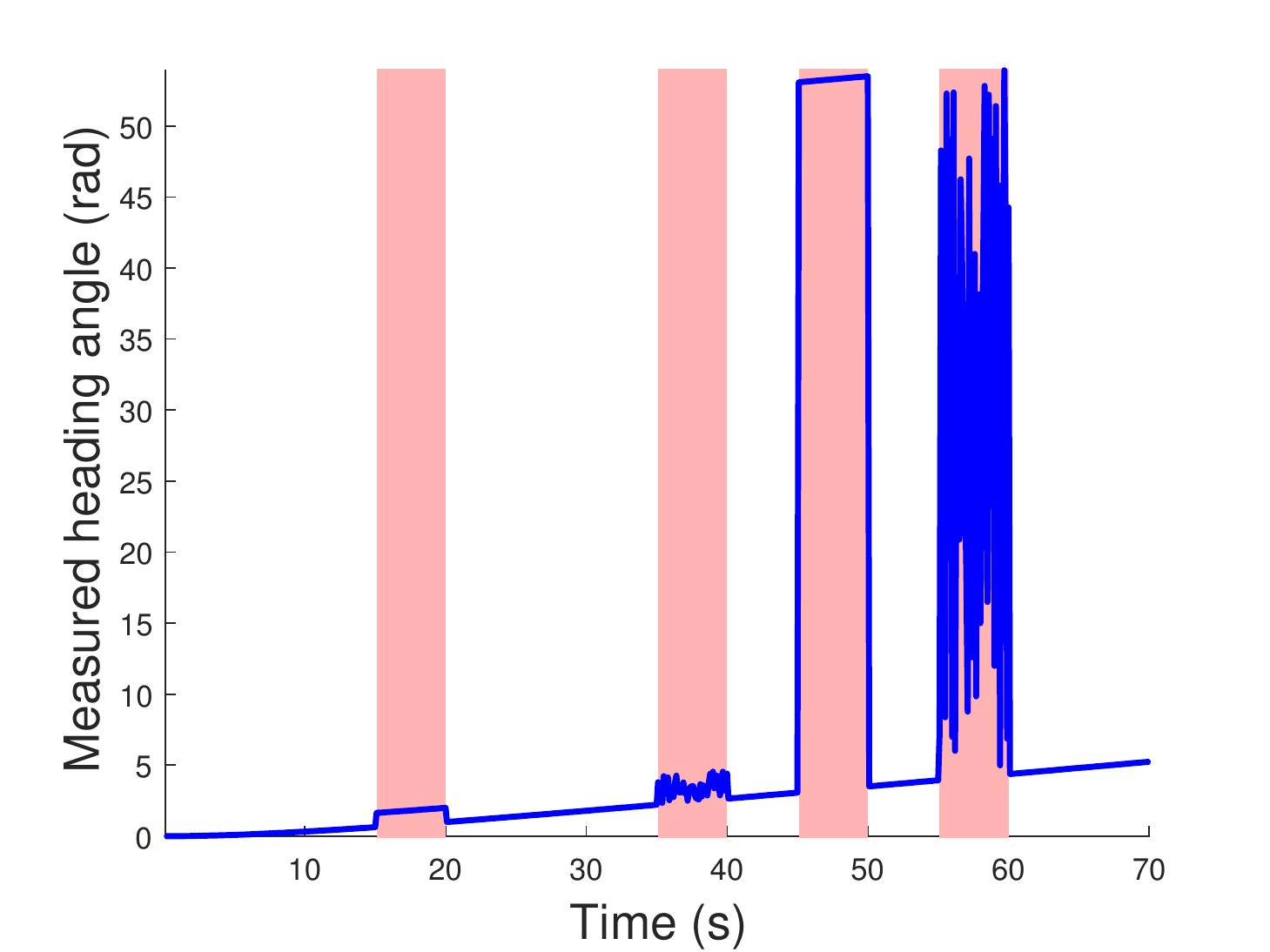}\label{Ex1-x3}}\\ 
  \caption{Measurement profiles : (a) $x$-position; (b) $y$-position; (c) heading angle. The shaded areas represent the occurrence of outliers.}
\label{Measurements}
\end{figure}

\begin{figure}  \centering
    \subfigure[]{
    \includegraphics[width=0.38\textwidth]{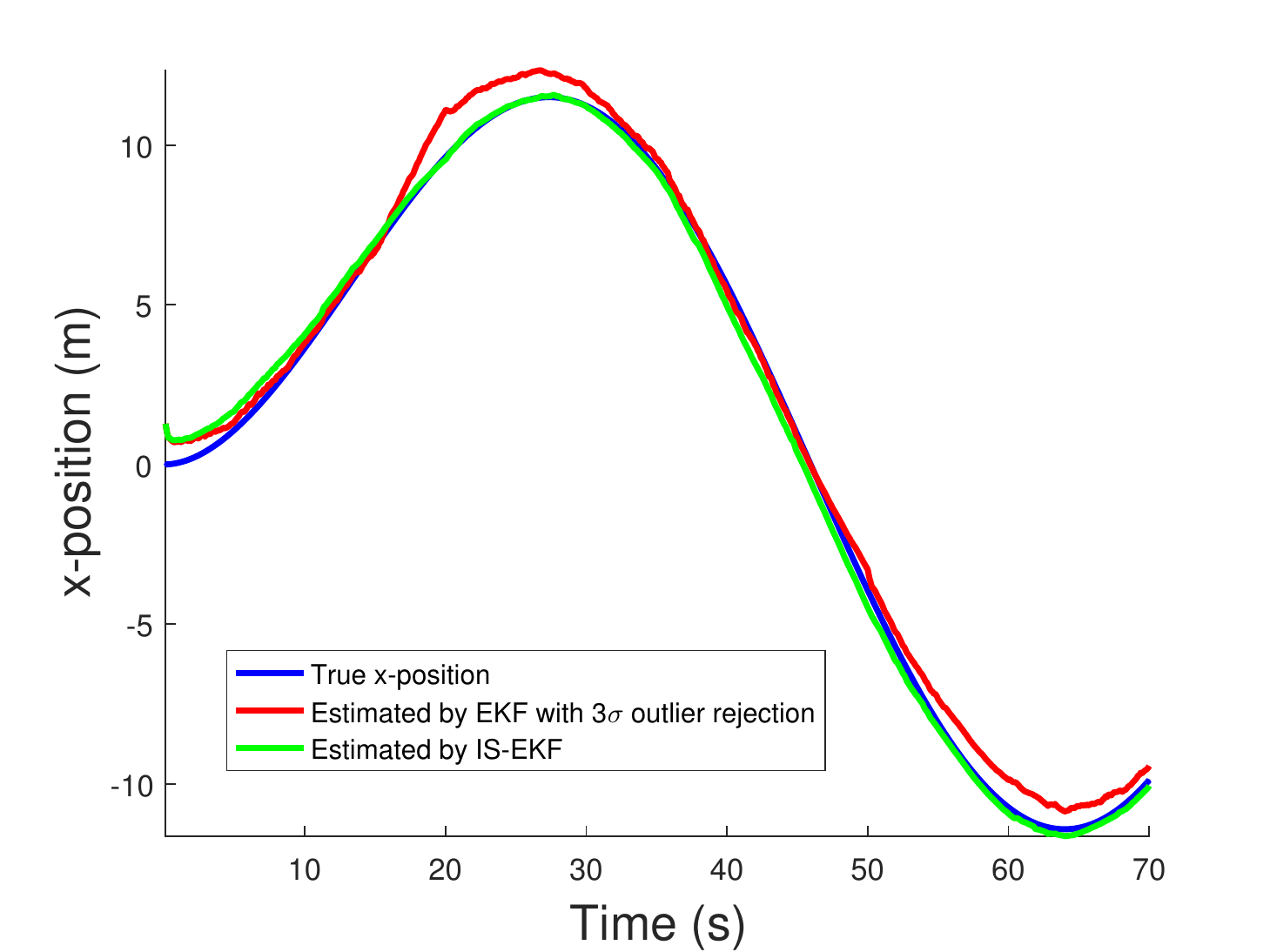}\label{Ex1-x1}}\\ 
  \subfigure[]{
    \includegraphics[width=0.38\textwidth]{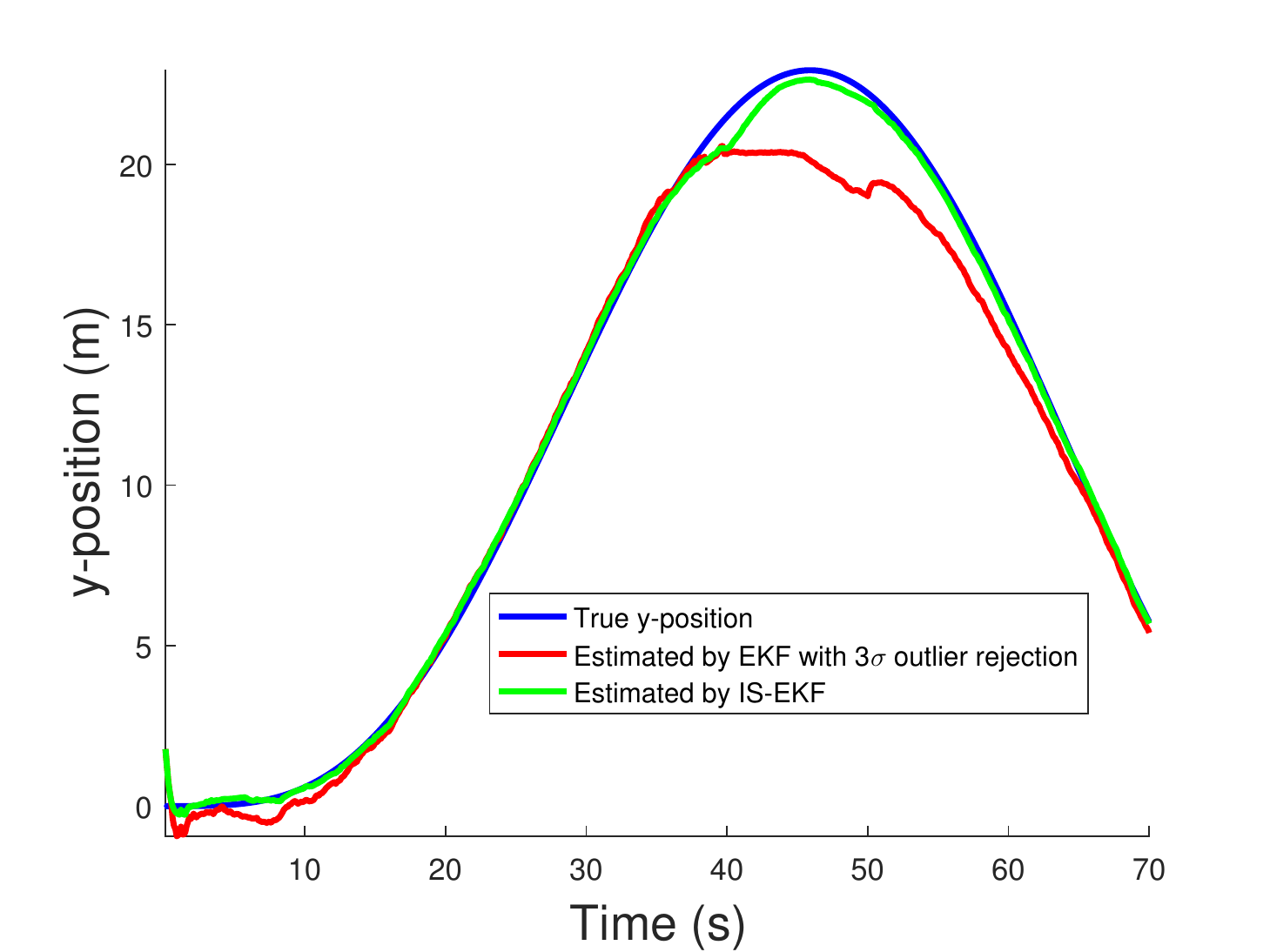}\label{Ex1-x2}}\\ 
   \subfigure[]{
   \includegraphics[width=0.38\textwidth]{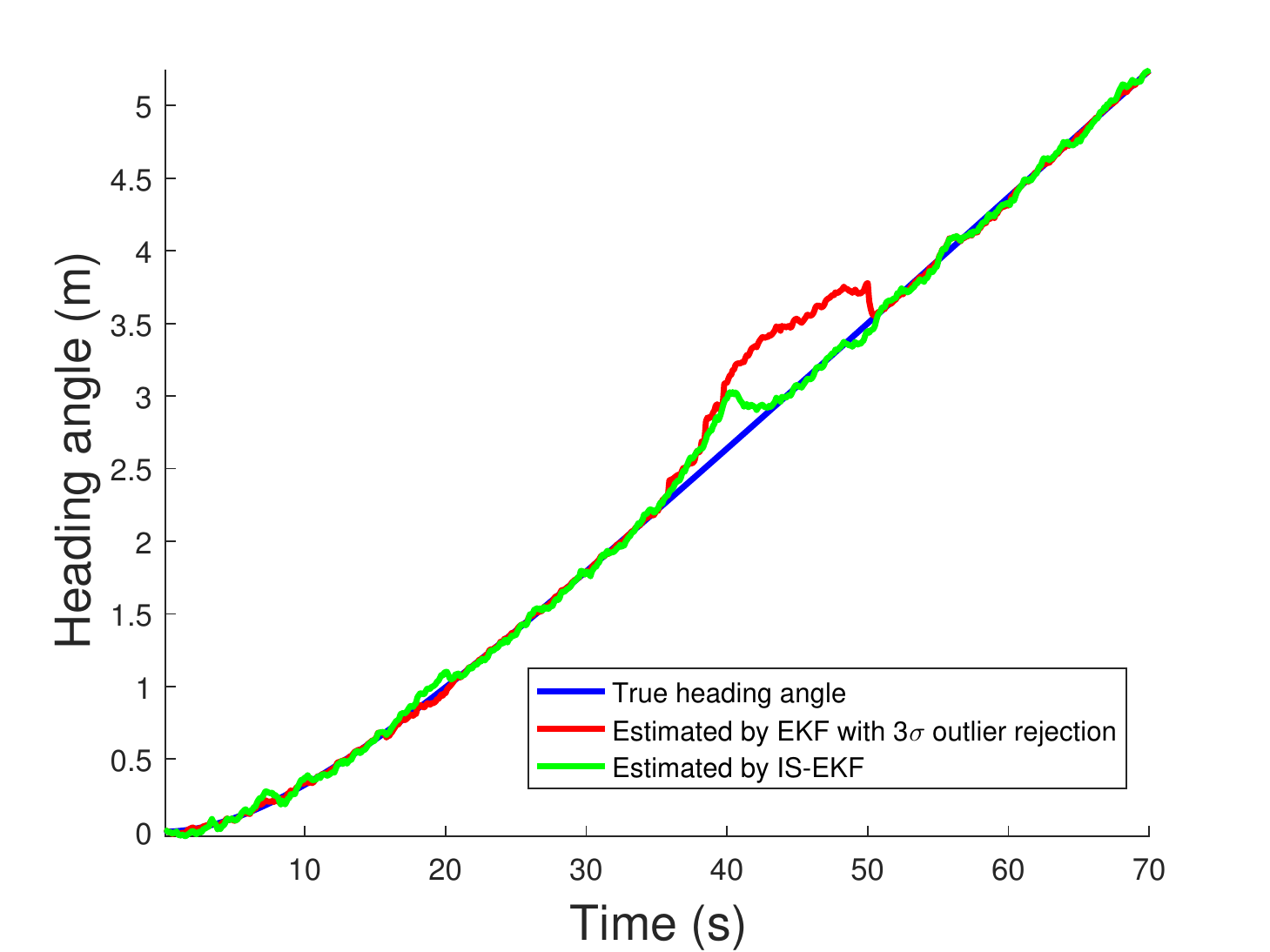}\label{Ex1-x3}}\\ 
  \caption{Estimation of the state variables: (a) estimation of  the $x$-position; (b) estimation of the $y$-position; (c) estimation of the heading angle.}
\label{x-estimation}
\end{figure}

\begin{figure}  \centering
    \includegraphics[width=0.45\textwidth]{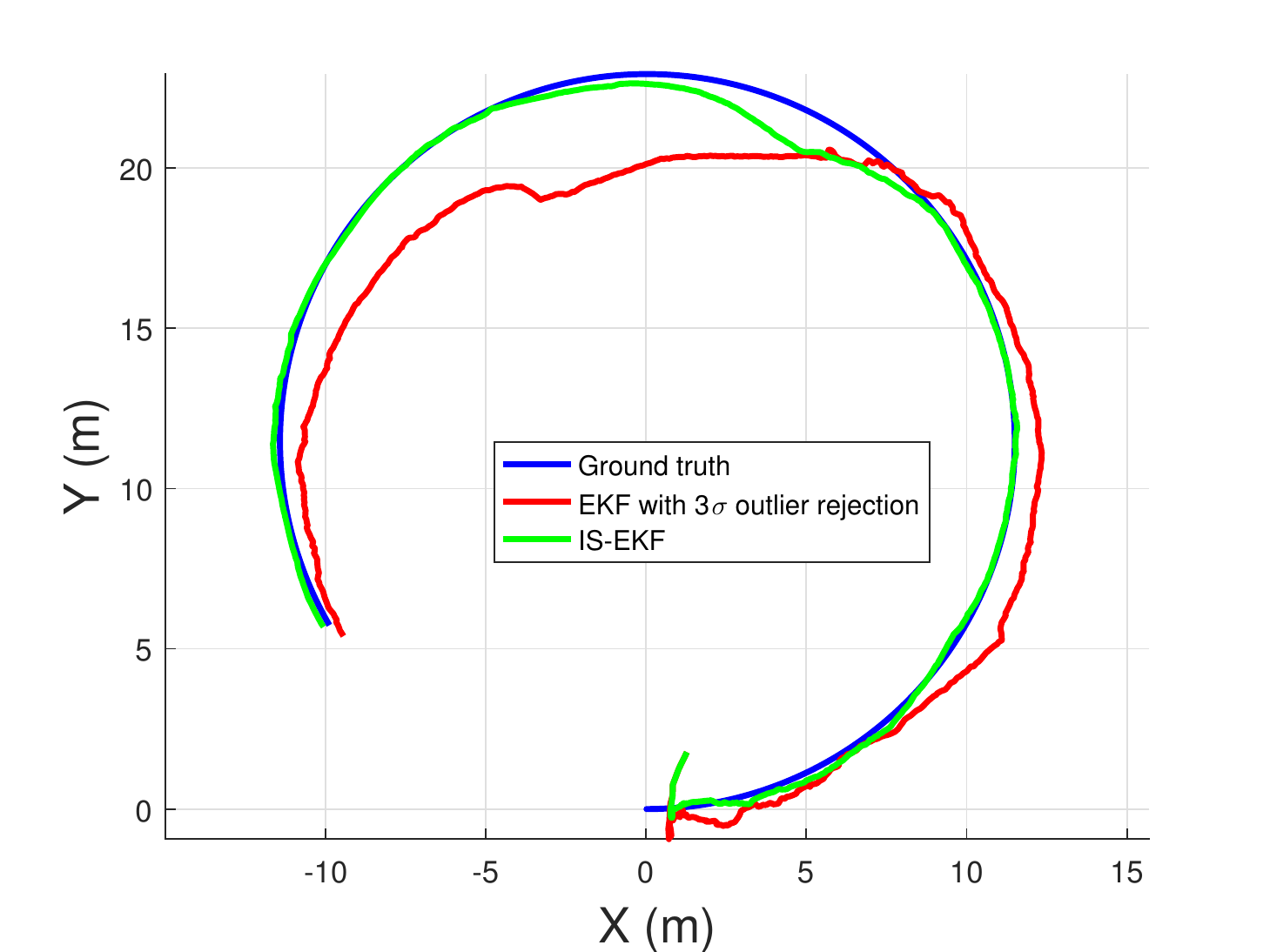}
  \caption{Estimated trajectories in comparison with the ground truth.}
\label{Estimated-Trajectory}
\end{figure}

\section{Application Example}\label{simulation}

In this section, we offer an application example to show the effectiveness of the proposed IS-EKF. The example is concerned with mobile robot localization, the purpose of which is to enable a mobile robot to determine its position and attitude using  GPS measurement data. The EKF has been a commonly used tool for this problem. However,   GPS signals are often subjected to outliers,  because they are  generally weak and can easily suffer from satellite
ephemeris, jamming, or errors in the satellite/receiver clock~\cite{Corke:2017}. This issue can significantly degrade the   accuracy of localization. 

Consider a mobile  wheeled robot with the following dynamic model~\cite{Sakai:IFAC:2010,Ghadiri-Modarres:TCST:2017}:
\begin{align*}
p_{x,k+1} &= p_{x,k} + \eta_k    T \cos(\theta_k),\\
p_{y,k+1} &= p_{y,k} + \eta_k    T \sin(\theta_k),\\
\theta_{k+1} & = \theta_k + T \delta_k,
\end{align*}
where $p_{x,k}$ and $p_{y,k}$ are the coordinates of the robot's center of mass, $\theta_k$  the heading angle, $\eta_k$ the robot's speed at the center of mass, $\delta_k$ the steering angle, and $T$ the sampling period. Thus, the state vector of the robot is $\left[ p_{x,k} \ p_{y,k} \ \theta_k \right]^\T$, and the control input vector is $\left[ \eta_k \ \delta_k \right]^\T$.  Here, $\eta_k$ and $\delta_k$ can be  read from onboard meters. 
In addition, we let $x_k$ and $y_k$   be obtained using GPS and   $\theta_k$ be measured by a compass.  The measurement model hence is a linear equation, but the dynamic model is nonlinear. Process and measurement noises are included into the model to represent uncertainties.

When the robot is moving and sampled every $T=0.1$ s,   the measurements of the $x$-coordinate and the steering angle are corrupted by outliers from time to time. The outlier disturbance $d_k$ is designed as
\begin{align*}
d_k = \left\{ \begin{array}{ll} \left[ \begin{matrix} 5 \ 1 \end{matrix} \right]^\T  & 150< k \leq 200,\\
2 \zeta_k  & 350< k \leq 400,\\
\left[ \begin{matrix} 100 \ 50 \end{matrix} \right]^\T  & 450< k \leq 500,\\
\left[ \begin{matrix} 100 & 0 \cr 0 & 50 \end{matrix} \right]\zeta_k  & 550< k \leq 600,
  \end{array} \right.
\end{align*}
where  $\zeta_k \in \mathbb{R}^2$ is a uniform random vector. This design takes account of different types of outliers in four stages --- $d_k$    is small at Stages 1-2 and large Stages 3-4, and it is constant at Stages 1 and 3 and random at Stages 2 and 4. In this setting, the synthetic measurements are generated and shown in Fig.~\ref{Measurements}, in which outlier-corrupted measurements are displayed in shaded areas. 

It can be easily verified that the conventional EKF will completely fail when the above outliers are imposed on the measurements. One way to improve the robustness of the EKF is to use the $ 3\sigma$ rule~\cite{Liu:CCE:2004}. That is,  the innovation $(y_k - h(\hat x_k))$ is considered as normal if it lies within the $\pm3\sigma$ bounds based on the covariance matrix $(H_k P_{k|k-1} H_k^\T+R)$, and outlying  otherwise. It is reset to be zero in the latter case so as not to distort the update procedure. We call this method as EKF with $3\sigma$ outlier rejection and use it to compare with the IS-EKF. In addition, to apply the IS-EKF in~\eqref{SKF-DT-Outline}-\eqref{sat_bound_dynamics-DT}, the following parameters are set for~\eqref{sat_bound_dynamics-DT}:
\begin{align*}
\Lambda_1 = \mathrm{diag}(\left[ \begin{matrix} 0.5 & 0.5 & 0.1  \end{matrix} \right]), \ 
\Lambda_2 = \mathrm{diag}(\left[ \begin{matrix} 0.1 & 0.1 & 0.1  \end{matrix} \right]),\\
\Gamma_1 = \mathrm{diag}(\left[ \begin{matrix} 10^2 & 10^2 & 5\times 10^{-3}  \end{matrix} \right]), \
\Gamma_2 = \mathrm{diag}(\left[ \begin{matrix} 9 & 9 & 9  \end{matrix} \right]).
\end{align*}

 Figs.~\ref{Ex1-x1}-\ref{Ex1-x3} show  the estimation of the $x$- and $y$-coor-dinates and the heading angle through time, respectively. It is seen that, with the $3\sigma$ outlier rejection, the EKF do not diverge seriously but  still can not provide reliable state estimation. As a contrast, the IS-EKF  demonstrates much better estimation performance, maintaining a smooth and accurate estimation  when the outliers appear. Further, Fig.~\ref{Estimated-Trajectory} illustrates the estimated  trajectories in comparison with the ground truth, in which the IS-EKF reconstructs the  trajectory at a good accuracy. These results reflect a considerable effectiveness of the IS-EKF in addressing  the outliers.
\vspace{3mm}

Here, we provide some further remarks about the advantages and application of the IS-EKF. 

\begin{rmk}
In addition to the example presented above, we performed numerous simulations about robot localization in different settings (e.g., outliers, noises, initial estimation), and we found that the IS-EKF consistently offer satisfactory estimation. Two important observations include:

\begin{itemize}

\item The IS-EKF can well handle outliers that last for a relatively long period and vary in magnitude or type. This contrasts with many existing methods. For example, the stubborn observer in~\cite{Alessandri:AUTO:2018} can only reject occasional, singly outliers, and the EKF with $3\sigma$ outlier rejection is less effective for dealing with small outliers that approximately align with the $\pm 3 \sigma$ bounds. Note that the EKF may offer improved estimation by adjusting the  outlier detection bounds from $\pm3\sigma$ to $\pm \ell \sigma$ with $\ell \in \mathbb{R}^+$. However, for each selection of $\ell$, it can still be futile for  outliers that roughly lie within the bounds. 

\item As an additional benefit, the IS-EKF  demonstrates good robustness again initial state uncertainties. It is known that the EKF can easily fail if the initial guess is not accurate, although it is often difficult to obtain a precise guess in practice. However, the innovation saturation can often check the divergence caused by a poor initial guess, reducing the possibility of a complete failure. \hfill$\bullet$
\end{itemize}
\end{rmk}

\begin{rmk}
Application of the IS-EKF requires the selection of a set of parameters for  the innovation saturation procedure. We have the following suggestions for practitioners:
\begin{itemize}

\item Let $\lambda_{1,i},\lambda_{2,i}<0$ for a continuous-time system and $0<\lambda_{1,i},\lambda_{2,i}<1$ for a discrete-time system. 

\item It is useful to choose $\lambda_{2,i}$ such that the dynamics $\varepsilon_i$ is fast in order to better track the change in innovation. This can be done by letting it take an appropriately small negative number in the continuous-time case or  a number close to zero in the discrete-time case.

\item Let $\gamma_{1,i},\gamma_{2,i}>0$ and $\gamma_{2,i}<10$.

\item The selection of $\gamma_{1,i}$ depends on the magnitude of the corresponding  innovation process. Set $\gamma_{1,i}$  to be a large number if the innovation is usually large in the normal, i.e., outlier-free, case and a small number  otherwise.  \hfill$\bullet$

\end{itemize}
\end{rmk}

\section{Conclusion}\label{conclusion}

The EKF has gained wide application across different fields as a popular state estimation tool. However, its estimation accuracy can be severely hampered by measurement outliers due to sensor anomaly, model uncertainties, data transmission errors or cyber attacks. This paper presented a novel innovation saturation mechanism, IS-EKF, which is a robustified EKF, as an alternative to the conventional EKF.  This mechanism saturates the innovation process, which is crucial for correcting the state estimation, thus ensuring a reasonable correction to be applied when outliers occur. We showed the IS-EKF architecture  for both continuous- and discrete-time systems and provided theoretical analysis to derive useful stability properties of the proposed approaches for linear systems. We conducted an application study on  mobile robot localization  to illustrate  the effectiveness of IS-EKF architecture. The numerical simulation  results  showed that the proposed approach brings about significant robustness for localization against GPS outliers.  This shows the viability of IS-EKF in effectively rejecting outliers of varying magnitude and durations at a reasonable computational cost and without the need of measurement redundancy.

\balance

\bibliographystyle{IEEEtran} 
\bibliography{Saturated-EKF-TCST_V4} 

\begin{thebibliography}{10}
\providecommand{\url}[1]{#1}
\csname url@samestyle\endcsname
\providecommand{\newblock}{\relax}
\providecommand{\bibinfo}[2]{#2}
\providecommand{\BIBentrySTDinterwordspacing}{\spaceskip=0pt\relax}
\providecommand{\BIBentryALTinterwordstretchfactor}{4}
\providecommand{\BIBentryALTinterwordspacing}{\spaceskip=\fontdimen2\font plus
\BIBentryALTinterwordstretchfactor\fontdimen3\font minus
  \fontdimen4\font\relax}
\providecommand{\BIBforeignlanguage}[2]{{%
\expandafter\ifx\csname l@#1\endcsname\relax
\typeout{** WARNING: IEEEtran.bst: No hyphenation pattern has been}%
\typeout{** loaded for the language `#1'. Using the pattern for}%
\typeout{** the default language instead.}%
\else
\language=\csname l@#1\endcsname
\fi
#2}}
\providecommand{\BIBdecl}{\relax}
\BIBdecl

\bibitem{Fang:JAS:2018}
H.~Fang, N.~Tian, Y.~Wang, M.~Zhou, and M.~A. Haile, ``Nonlinear bayesian
  estimation: from {Kalman} filtering to a broader horizon,'' \emph{IEEE/CAA
  Journal of Automatica Sinica}, vol.~5, no.~2, pp. 401--417, 2018.

\bibitem{Ting:IROS:2007}
J.~A. Ting, E.~Theodorou, and S.~Schaal, ``A {Kalman} filter for robust outlier
  detection,'' in \emph{{IEEE/RSJ} International Conference on Intelligent
  Robots and Systems}, 2007, pp. 1514--1519.

\bibitem{Zhang:TCT:2010}
Y.~Zhang, N.~Meratnia, and P.~Havinga, ``Outlier detection techniques for
  wireless sensor networks: A survey,'' \emph{IEEE Communications Surveys
  Tutorials}, vol.~12, no.~2, pp. 159--170, 2010.

\bibitem{Haile:TR:2016}
M.~A. Haile, J.~C. Riddick, and A.~H. Assefa, ``Robust particle filters for
  fatigue crack growth estimation in rotorcraft structures,'' \emph{IEEE
  Transactions on Reliability}, vol.~65, no.~3, pp. 1438--1448, 2016.

\bibitem{Sinopoli:TAC:2004}
B.~Sinopoli, L.~Schenato, M.~Franceschetti, K.~Poolla, M.~I. Jordan, and S.~S.
  Sastry, ``Kalman filtering with intermittent observations,'' \emph{IEEE
  Transactions on Automatic Control}, vol.~49, no.~9, pp. 1453--1464, 2004.

\bibitem{Jia:TCST:2013}
B.~Jia and M.~Xin, ``Vision-based spacecraft relative navigation using
  sparse-grid quadrature filter,'' \emph{IEEE Transactions on Control Systems
  Technology}, vol.~21, no.~5, pp. 1595--1606, 2013.

\bibitem{Li:TCNS:2017}
Y.~Li, L.~Shi, and T.~Chen, ``Detection against linear deception attacks on
  multi-sensor remote state estimation,'' \emph{IEEE Transactions on Control of
  Network Systems}, 2017, in press.

\bibitem{Masreliez:TAC:1975}
C.~Masreliez, ``Approximate non-{Gaussian} filtering with linear state and
  observation relations,'' \emph{IEEE Transactions on Automatic Control},
  vol.~20, no.~1, pp. 107--110, 1975.

\bibitem{Sorenson:AUTO:1971}
H.~Sorenson and D.~Alspach, ``Recursive bayesian estimation using {Gaussian}
  sums,'' \emph{Automatica}, vol.~7, no.~4, pp. 465 -- 479, 1971.

\bibitem{Meinhold:JASA:1989}
R.~J. Meinhold and N.~D. Singpurwalla, ``Robustification of {Kalman} filter
  models,'' \emph{Journal of the American Statistical Association}, vol.~84,
  no. 406, pp. 479--486, 1989.

\bibitem{Sarkka:TAC:2009}
S.~Sarkka and A.~Nummenmaa, ``Recursive noise adaptive {Kalman} filtering by
  variational {Bayesian} approximations,'' \emph{IEEE Transactions on Automatic
  Control}, vol.~54, no.~3, pp. 596--600, 2009.

\bibitem{Agamennoni:ICRA:2011}
G.~Agamennoni, J.~I. Nieto, and E.~M. Nebot, ``An outlier-robust {Kalman}
  filter,'' in \emph{IEEE International Conference on Robotics and Automation},
  2011, pp. 1551--1558.

\bibitem{Gandhi:TSP:2010}
M.~A. Gandhi and L.~Mili, ``Robust {Kalman} filter based on a generalized
  maximum-likelihood-type estimator,'' \emph{IEEE Transactions on Signal
  Processing}, vol.~58, no.~5, pp. 2509--2520, 2010.

\bibitem{Palma:IJACSP:2017}
D.~De~Palma and G.~Indiveri, ``Output outlier robust state estimation,''
  \emph{International Journal of Adaptive Control and Signal Processing},
  vol.~31, no.~4, pp. 581--607, 2017.

\bibitem{Keller:AUTO:1997}
J.~Keller and M.~Darouach, ``Optimal two-stage kalman filter in the presence of
  random bias,'' \emph{Automatica}, vol.~33, no.~9, pp. 1745 -- 1748, 1997.

\bibitem{Hsieh:TAC:2000}
C.-S. Hsieh, ``Robust two-stage {Kalman} filters for systems with unknown
  inputs,'' \emph{IEEE Transactions on Automatic Control}, vol.~45, no.~12, pp.
  2374--2378, 2000.

\bibitem{Gillijns:AUTO:2007}
S.~Gillijns and B.~D. Moor, ``Unbiased minimum-variance input and state
  estimation for linear discrete-time systems with direct feedthrough,''
  \emph{Automatica}, vol.~43, no.~5, pp. 934 -- 937, 2007.

\bibitem{Fang:IJACSP:2011}
H.~Fang, Y.~Shi, and J.~Yi, ``On stable simultaneous input and state estimation
  for discrete-time linear systems,'' \emph{International Journal of Adaptive
  Control and Signal Processing}, vol.~25, no.~8, pp. 671--686, 2011.

\bibitem{Yong:AUTO:2016}
S.~Z. Yong, M.~Zhu, and E.~Frazzoli, ``A unified filter for simultaneous input
  and state estimation of linear discrete-time stochastic systems,''
  \emph{Automatica}, vol.~63, pp. 321 -- 329, 2016.

\bibitem{Shi:AUTO:2016}
D.~Shi, T.~Chen, and M.~Darouach, ``Event-based state estimation of linear
  dynamic systems with unknown exogenous inputs,'' \emph{Automatica}, vol.~69,
  pp. 275 -- 288, 2016.

\bibitem{Fang:AUTO:2013}
H.~Fang, R.~A. de~Callafon, and J.~Cort\'{e}s, ``Simultaneous input and state
  estimation for nonlinear systems with applications to flow field
  estimation,'' \emph{Automatica}, vol.~49, no.~9, pp. 2805 -- 2812, 2013.

\bibitem{Mu:JEM:2015}
H.-Q. Mu and K.-V. Yuen, ``Novel outlier-resistant extended {Kalman} filter for
  robust online structural identification,'' \emph{Journal of Engineering
  Mechanics}, vol. 141, no.~1, p. 04014100, 2015.

\bibitem{Gandhi:VT:2009}
M.~A. Gandhi, ``Robust {Kalman} filters using generalized maximum
  likelihood-type estimators,'' Ph.D. dissertation, Virginia Polytechnic
  Institute and State University.

\bibitem{Simon:2006}
D.~Simon, \emph{Optimal State Estimation: Kalman, $\mathcal{H}_\infty$, and
  Nonlinear Approaches}.\hskip 1em plus 0.5em minus 0.4em\relax
  Wiley-Interscience, 2006.

\bibitem{Li:AUTO:2016}
X.~Li, J.~Lam, H.~Gao, and J.~Xiong, ``$\mathcal{H}_\infty$ and $\mathcal{H}_2$
  filtering for linear systems with uncertain markov transitions,''
  \emph{Automatica}, vol.~67, pp. 252 -- 266, 2016.

\bibitem{Alessandri:AUTO:2018}
A.~Alessandri and L.~Zaccarian, ``Stubborn state observers for linear
  time-invariant systems,'' \emph{Automatica}, vol.~88, pp. 1 -- 9, 2018.

\bibitem{Fang:CDC:2018}
H.~Fang, M.~A. Haile, and Y.~Wang, ``Robustifying the {Kalman} filter against
  measurement outliers: An innovation saturation mechanism,'' in
  \emph{Proceedings of IEEE Conference on Decision and Control}, 2018, pp.
  6390--6395.

\bibitem{Kucera:Kybernetika:1973}
V.~Kucera, ``A review of the matrix {Riccati} equation,'' \emph{Kybernetika},
  vol.~09, no.~1, pp. 42--61, 1973.

\bibitem{Tarbouriech:Springer:2011}
S.~Tarbouriech, G.~Garcia, J.~M. Gomes~da Silva, and I.~Queinnec,
  \emph{Stability and Stabilization of Linear Systems with Saturating
  Actuators}.\hskip 1em plus 0.5em minus 0.4em\relax London: Springer London,
  2011.

\bibitem{Corke:2017}
P.~Corke, \emph{{Robotics, Vision and Control}}.\hskip 1em plus 0.5em minus
  0.4em\relax Springer International Publishing.

\bibitem{Sakai:IFAC:2010}
A.~Sakai and Y.~Kuroda, ``Discriminative parameter training of unscented
  {Kalman} filter,'' \emph{IFAC Proceedings Volumes}, vol.~43, no.~18, pp. 677
  -- 682, 2010, 5th IFAC Symposium on Mechatronic Systems.

\bibitem{Ghadiri-Modarres:TCST:2017}
M.~A. Ghadiri-Modarres, M.~Mojiri, and H.~R.~Z. Zangeneh, ``New schemes for
  {GPS}-denied source localization using a nonholonomic unicycle,'' \emph{IEEE
  Transactions on Control Systems Technology}, vol.~25, no.~2, pp. 720--727,
  2017.

\bibitem{Liu:CCE:2004}
H.~Liu, S.~Shah, and W.~Jiang, ``On-line outlier detection and data cleaning,''
  \emph{Computers \& Chemical Engineering}, vol.~28, no.~9, pp. 1635 -- 1647,
  2004.

\end{thebibliography}
\end{document}